\def\fst{\pi_1}
\def\snd{\pi_2}
\def\Int{int}
\newcommand{\con}[4]{C_{#1}({#2},({#3},...,{#4}))}
\newcommand{\pair}[2]{(#1,#2)}
\newcommand{\tick}[2]{tick~#1~#2}
\newcommand{\tickl}[2]{lazy\_tick~#1~#2}
\newcommand{\prodT}[2]{#1\times#2}
\newcommand{\app}[2]{#1\ #2}
\newcommand{\lam}[3]{\lambda #1.\,#2#3}
\newcommand{\llam}[3]{\lambda #1:#3.\,#2}
\newcommand{\Lam}[3]{\Lambda #1:#3.\,#2}
\newcommand{\fix}[3]{fix~#1:#2.\,#3}
\newcommand{\Let}[3]{let~#1=#2~in~#3}
\newcommand{\des}[3]{matd(e_0,\overrightarrow{#1(x_0,(x_1,...,x_{#2}).#3)}}
\newcommand{\destructor}[4]{matd(#1,\overrightarrow{#2(x_0,(x_1,...,x_{#3}).#4)}}
\newcommand{\arrow}[6]{
(\ifthenelse{\equal{#1}{}}{#3}{#1:#3}
\rightarrow \ifthenelse{\equal{#6}{}}{#5}{#6:#5})_{[#2\rightarrow #4]}}
\def\ind{{ind}\overrightarrow{(C,(T,m))}}
\newcommand{\poly}[3]{\forall #1:#2.\,#3}
\newcommand{\ectx}[2]{#1~|~#2}
\newcommand{\mctx}[3]{#1,\, #2:#3 }
\newcommand{\fctx}[1]{\bigl[#1\bigr]}
\newcommand{\op}[2]{\mathbf{op_{#1}}(\overrightarrow{#2})}
\newcommand{\mini}[2]{\mathbf{min}_{#1}({#2})}
\newcommand{\maxi}[2]{\mathbf{max}_{#1}({#2})}
\newcommand\doubleplus{+\kern-1.3ex+\kern0.8ex}
\definecolor[named]{ACMBlue}{cmyk}{1,0.1,0,0.1}
\definecolor[named]{ACMYellow}{cmyk}{0,0.16,1,0}
\definecolor[named]{ACMOrange}{cmyk}{0,0.42,1,0.01}
\definecolor[named]{ACMRed}{cmyk}{0,0.90,0.86,0}
\definecolor[named]{ACMLightBlue}{cmyk}{0.49,0.01,0,0}
\definecolor[named]{ACMGreen}{cmyk}{0.20,0,1,0.19}
\definecolor[named]{ACMPurple}{cmyk}{0.55,1,0,0.15}
\definecolor[named]{ACMDarkBlue}{cmyk}{1,0.58,0,0.21}
\newcommand\ignore[1]{}
\newcommand\tjudge[7]{#1\,|\,#2\,|\,#3\,\vdash\,#4:\,\bigl[#5\bigr]_{#7} #6}
\newcommand\ajudge[7]{#1\,|\,#2\,|\,#3\,\vdash_a\,#4:\,\bigl[#5\bigr]_{#7} #6}
\newcommand\pjudge[5]{#1\,|\,#2\,\vdash\,#3#4#5}
\crefname{theorem}{Thm.}{Thms.}
\crefname{lemma}{Lem.}{Lemmas}
\crefname{corollary}{Cor.}{Cors.}
\crefname{figure}{Fig.}{Figs.}
\crefname{definition}{Defn.}{Defns.}
\crefname{table}{Tab.}{Tabs.}
\crefname{example}{Ex.}{Exs.}
\crefname{item}{item}{items}
\crefname{footnote}{footnote}{footnotes}
\crefname{observation}{Obs.}{Obs.}
\crefname{remark}{Remark}{Remarks}
\crefname{proposition}{Prop.}{Props.}
\crefname{equation}{Eqn.}{Eqns.}
\crefname{counterexample}{Counterexample}{Counterexamples}
\crefname{property}{Property}{Properties}
\crefname{algorithm}{Algorithm}{Algorithms}
\newcommand{\mi}[1]{\mathit{#1}}
\newcommand{\ms}[1]{\mathsf{#1}}
\newcommand{\listop}{\mi{List}}
\newcommand{\tlist}[1]{\listop(#1)}
\newcommand{\treeop}{\mi{Tree}}
\newcommand{\ttree}[1]{\treeop(#1)}
\newcommand{\tint}{\mi{int}}
\newcommand{\tunit}{\mi{unit}}
\newcommand{\lang}{\ensuremath{\lambda_\ms{amor}^\ms{na}}}
\newcommand{\vdashpq}[2]{\sststile{#2}{#1}}
\begin{document}
\title{Dependently-Typed AARA: A Non-Affine Approach for Resource Analysis of Higher-Order Programs}
\titlerunning{Dependently-Typed AARA}
%
\author{Han Xu\inst{1}\orcidID{0000-0002-2548-6866} \and
Di Wang\inst{2}\orcidID{0000-0002-2418-7987}\thanks{Corresponding Author} }
%
%
\institute{Princeton University, Princeton NJ 08544, USA \and
Key Lab of HCST (PKU), MOE; SCS, Peking University, China}
\maketitle              
\begin{abstract}

Static resource analysis determines the resource consumption (e.g., time complexity) of a program without executing it.
Among the numerous existing approaches for resource analysis, affine type systems have been one dominant approach.
However, these affine type systems fall short of deriving precise resource behavior of higher-order programs, particularly in cases that involve partial applications.

This article presents \lang{}, a non-affine AARA-style dependent type system for resource reasoning about higher-order functional programs.
The key observation is that the main issue in previous approaches comes from (i) the close coupling of types and resources, and (ii) the conflict between affine and higher-order typing mechanisms.
To derive precise resource behavior of higher-order functions, \lang{} decouples resources from types and follows a non-affine typing mechanism.
The non-affine type system of \lang{} achieves this by using dependent types, which allow expressing type-level potential functions separate from ordinary types.
This article formalizes \lang{}'s syntax and semantics, and proves its soundness, which guarantees the correctness of resource bounds.
Several challenging classic and higher-order examples are presented to demonstrate the expressiveness and compositionality of \lang{}'s reasoning capability.
This article also includes an algorithmic variant of \lang{}'s type system and a discussion of the automation of type checking and inference for \lang{}.
\keywords{Type System \and Static analysis \and Resource Analysis.}
\end{abstract}
\section{Introduction}
\label{Se:Introduction}







%
%
%

\paragraph{\textbf{Resource Analysis}}
Resource consumption (e.g., time, space, and complexity) of programs has always been at the heart of computer science, with programming language research being no exception. 
In this article, we focus on the static verification of upper bounds on the resource
consumption of a higher-order functional program.
For example, below shows the type and implementation of a \emph{curried} \texttt{append}
function for integer lists, where we wrap recursive calls in a special construct
\texttt{\color{ACMDarkBlue}tick\;1} to indicate that we specify a resource model that
counts the number of recursive calls.
\begin{lstlisting}
append :: $\tlist{\tint}$ -> $\tlist{\tint}$ -> $\tlist{\tint}$
append = fun x. fun y. case x of nil => y | cons(x0,x1) => cons(x0, ?\color{ACMDarkBlue}tick 1? (append x1 y))
\end{lstlisting}
A resource analysis of the \texttt{append} function should yield that the number
of recursive calls is upper-bounded by the length of the first argument.
There have been tons of techniques for verifying or automatically inferring
such an upper bound; in particular, there have been many \emph{type systems} for
resource analysis
\cite{ma03static,knoth20liquid,PLDI:KWP19,ESOP:HH10,POPL:HDW17,POPL:HAH11,POPL:JHL10,FM:JLH09,POPL:CBG17,POPL:CW00,LICS:LG11,ICFP:DLR15,POPL:HVH20,POPL:KML20,vi21unify,POPL:NSG22,POPL:GNS24,FOSSACS:KH20,ICFP:KH21,LICS:GKH23,POPL:Danielsson08,OOPSLA:WWC17,ICFP:AL17,APLAS:HH10,ESOP:HS15,SP:NDF17}.

Among those type systems, \emph{automatic amortized resource analysis} (AARA) has been
a fundamental approach for deriving symbolic resource bounds, i.e., bounds that are given
by a function of the program's input.
AARA was initially studied by Hofmann and Jost~\cite{ma03static}, and later extended to various resource-analysis situations
with a recent survey provided by Hoffmann and Jost~\cite{JMSCS:HJ22}.
The high-level idea of AARA is to augment the type system with \emph{resource annotations},
which specify \emph{potential functions} carried on data structures.
For example, a value of type $\tint^1$ carries one unit of potential, which can be
used to pay for one unit of resource consumption (i.e., \texttt{\color{ACMDarkBlue}tick\;1}).
A more interesting example is the annotated list type $\tlist{\tint^1}$, the value of
which carries one unit of potential per list element, thus the whole potential equals
exactly the list length.
An AARA type system would derive an \emph{uncurried} type annotation for \texttt{append}:
\begin{lstlisting}
append :: $\tlist{\tint^1}$ $\times$ $\tlist{\tint^0}$ -> $\tlist{\tint^0}$
\end{lstlisting}
The type suggests that \texttt{append} consumes up to one unit of resource per element in the first list.

\paragraph{\textbf{Closure and Resource}}
With the \emph{uncurried} annotated type shown above, can we say that the time complexity (in terms of
the number of recursive calls) of a curried \texttt{append} is $O(n)$, if $n$ is the length
of the first argument?
Unfortunately, the fact that \texttt{append} is \emph{curried} makes things complicated, because
it is obvious that partially applying \texttt{append} to \emph{one} list, e.g., $\texttt{append}\;\ell_1$,
would consume no resource.
Moreover, the partial application would create a closure that captures $\ell_1$.
When applied to another list $\ell_2$, the closure's time complexity is linear---\emph{not} in the length of the closure's argument $\ell_2$---but in the length of the captured $\ell_1$.
In other words, precise resource analysis of higher-order functions
requires that the type system track fine-grained information about closures,
especially those that can capture data structures with potentials.
We call such closures \emph{potential-capturing closures}.

In this article, we focus on supporting AARA-style resource analysis with potential-carrying closures.
To demonstrate the technical challenges, we need to first zoom in on how AARA works.
Most AARA-style type systems rely on \emph{linear} or \emph{affine} typing~\cite{kn:Walker02}, whose (informal)
intension is to disallow multiple usages of the same program variable.
This makes sense because the data structure referred to by a program variable may carry potentials,
while duplicated uses of a variable cause duplications of its potentials.
As a consequence, every function type in an affine type system would be annotated explicitly
with a \emph{multiplicity}, i.e., $A \to^m B$, where $m$ is an upper bound on the times the function
could be applied.
For example, the \emph{curried} \texttt{append} function is actually assigned with the following more verbose type annotation:
\begin{lstlisting}
append :: $\tlist{\tint^1}$ ->$^\infty$ $\tlist{\tint^0}$ ->$^1$ $\tlist{\tint^0}$
\end{lstlisting}
The first arrow is annotated with the multiplicity $\infty$, which indicates
that the $\texttt{append}$ function can be applied an arbitrary number of times.
One can justify this by observing that function \texttt{append} can generate arbitrarily many closures, as long as enough potentials are provided with its first argument $\ell_1$.
The second arrow is annotated with the multiplicity $1$, which indicates that the closure
obtained by $\texttt{append}\;\ell_1$ can be applied at most once if $\ell_1$ is of type
$\tlist{\tint^1}$.
One can justify this by observing that the closure captures $\ell_1$, whose potential equals
the length of $\ell_1$ (denoted by $|\ell_1|$), but applying the closure to $\ell_2$ would consume
$|\ell_1|$ units of resource and use up all the captured potential, thus the closure should not be applied more than once.

However, the verbose type annotation for \texttt{append} does not characterize its precise
resource behavior, e.g., partially applying \texttt{append} to one list $\ell_1$ does not
consume any resources, but the type already requires $|\ell_1|$ units.
In addition, the type system determines the number of uses by its type annotation instead of by its context. Thus, a re-analysis of typing is triggered every time we put \texttt{append} into a different program context, which is rather non-compositional.
As an example, the verbose type annotation shown above is \emph{not} suitable for checking
\begin{lstlisting}
let app_par = append $\ell_1$ in (app_par $\ell_2$, app_par $\ell_3$)
\end{lstlisting}
because the program would use $\texttt{append}\;\ell_1$ twice. We need a different annotation like
\begin{lstlisting}
append :: $\tlist{\tint^2}$ ->$^\infty$ $\tlist{\tint^0}$ ->$^2$ $\tlist{\tint^0}$
\end{lstlisting}
to type-check the program.

This \texttt{append} problem is also pointed out by Scherer and Hoffmann~\cite{LPAR:SH13} but it remains incompletely solved. In their work, they studied type-based analysis of closures and proposed \emph{open closure types},
i.e., function types annotated with type contexts to track data-flow properties
of captured variables.
The \texttt{append} is then possible to track potentials with an annotated type like
\begin{lstlisting}
append :: $[](x:\tlist{\tint^0})$ ->$^\infty$ $[x:\tlist{\tint^1}](y:\tlist{\tint^0})$ ->$^1$ $\tlist{\tint^0}$
\end{lstlisting}
where $[\Gamma](x:A) \to^m B$ uses a resource-annotated type context $\Gamma$ to describe
the behavior of the closure application.
However, even if we adapted open closure types to AARA, the resulting type system
would still be affine and face the issue of having to determine multiplicities upon function definitions.

\paragraph{\textbf{Challenge: Go beyond Affine Typing}}
Our key observation is the following:
\begin{center}
\framebox{
\begin{minipage}{0.75\textwidth}
\emph{Existing AARA-style type systems require affine typing because they tightly couple datatypes with potentials.}
\end{minipage}
}
\end{center}
For example, the type $\tlist{\tint^1}$ shown above is a resource-annotated type that
couples an ordinary list type $\tlist{\tint}$ with a potential function that equals
the length of the list.
More advanced examples include polynomial-potential annotations~\cite{ESOP:HH10}, e.g.,
$\listop^{(q_0,q_1,q_2)}(\tint)$ couples $\tlist{\tint}$ with the potential function
$\lambda\ell.\, q_0 \binom{|\ell|}{0} + q_1 \binom{|\ell|}{1} + q_2 \binom{|\ell|}{2}$.
Resource-annotated types play an important role in the \emph{automation} part of AARA:
a type system can design a specific numerical space of resource potentials (e.g., non-negative integers or rational numbers)
and then reduce the type inference to solving a system of constraints by the annotations (e.g., linear programming).

In this article, we tackle the challenge of going beyond affine typing by
completely \emph{decoupling} potentials from datatypes.
Intuitively, because all datatypes become potential-free, they do not need
to be affine; thus, we fall back to a standard type-system design.
On the other hand, we need a mechanism to \emph{explicitly} represent potential
functions at the type level.
Taking inspiration from open closure types, we introduce a dependent-arrow-like
notation $[f_1]_x T_1 \to [f_2]_y T_2$, where $x$ and $y$ bind the argument
and result respectively, $T_1$ and $T_2$ are ordinary types, and $f_1,f_2$ are
potential functions that can reference free variables in the type context of the function definition.
For example, the \texttt{append} function has the following type in our type system:
\begin{lstlisting}
append :: $\arrow{x}{0}{\tlist{\tint}}{0}{\arrow{y}{\mi{length}(x)}{\tlist{\tint}}{0}{\tlist{\tint}}{}}{}$
\end{lstlisting}
Note that we hide unnecessary binders for brevity.
The type above should be interpreted as follows:
the \texttt{append} function takes an argument $x$ and it does not require any potential
to create a closure which captures $x$.
The obtained closure then takes an argument $y$ and it requires $\mi{length}(x)$ units
of potential to return a list without potential.
This type annotation has two benefits:
(i) it characterizes the precise resource behavior of \texttt{append}, including the behavior
of its partial application, and
(ii) it does not need to determine any multiplicity and it is not an affine type at all.
As a tradeoff, our type system has to support type-level computations such as $\mathit{length}(x)$
that can use program variables.
Thus, our type system is a dependently-typed variant of AARA.

\paragraph{\textbf{Challenge: Support Lightweight Dependent Typing}}
There have been several dependent-type-based approaches for resource analysis or verification.
Wang et al.~\cite{OOPSLA:WWC17} developed a refinement type system for complexity analysis, focusing on constraint-based analysis instead of supporting potential-based analysis like AARA.
Knoth et al.\cite{PLDI:KWP19,knoth20liquid} developed liquid-style type systems that integrate refinement typing with AARA,
while Rajani et al.~\cite{vi21unify} proposed a dependently-typed calculus for amortized resource analysis;
however, those systems still require linear or affine typing for tracking potentials.
Niu et al.~\cite{POPL:NSG22} proposed a cost-aware logical framework that can be instantiated to build
different resource analyses, including amortized analysis, but it emphasizes its logical framework instead of giving a concrete resource-annotated dependent type system.
%

In this article, we focus on designing a lightweight AARA-style dependent type system
\emph{without} affine typing.
Our type system is lightweight in the sense that its dependent part describes \emph{only}
potential functions, e.g., numeric primitive recursions over inductive datatypes.
One major challenge in designing the type system is to ensure \emph{compositional} typing,
especially for dependent function applications.
For example, partially applying \texttt{append} to an expression $e$ of type $\tlist{\tint}$
with $\mathit{length}(e)$ units of potential would yield the following typing:
\begin{lstlisting}
append $e$ :: $\exists x:\tlist{\tint}.\, [\mathit{length}(x)]\arrow{y}{\mi{length}(x)}{\tlist{\tint}}{0}{\tlist{\tint}}{}$ 
\end{lstlisting}
where the existential indicates that $\texttt{append}\;e$ constructs a closure that captures
a list $x$ with $\mathit{length}(x)$ units of potential.
However, managing existentials would quickly complicate the typing, especially when there
are higher-order functions with different levels of existential quantifiers. But fortunately, global existential quantifiers can satisfy our need for potential analysis.

%
We here propose a lightweight solution that keeps a \emph{global} potential context $\Omega$ for those existentials along with
the usual type context $\Gamma$.
In some sense it is similar to the idea of prenex polymorphism, which places all universal quantifiers at
the outermost position.
In our setting, we consider placing all existential quantifiers, which account for potentials captured in
a closure, at the outermost position.
Our type judgements then take the form $\tjudge{\Omega}{\Gamma}{f_1}{e}{f_2}{T}{x}$,
where  $\Omega$ is our global potential context, $\Gamma$ is the usual type context,
$f_1$ and $f_2$ are potential functions, and $x$ is the local binder on $T$ which can be omitted for brevity.
For example, below are possible type judgements for \texttt{append} and $\texttt{append}\;e$:
\begin{align*}
\cdot \mid \cdot \mid 0  \vdash \texttt{append} : \arrow{x}{0}{\tlist{\tint}}{0}{\arrow{y}{\mi{length}(x)}{\tlist{\tint}}{0}{\tlist{\tint}}{}}{} \\
x:\tlist{\tint} \mid \cdot \mid 0  \vdash \texttt{append}\;e : [\mathit{length}(x)]\arrow{y}{\mi{length}(x)}{\tlist{\tint}}{0}{\tlist{\tint}}{}
\end{align*}
In this way, our type system sidesteps the issue of possible ``pollution'' of existentials
while still being effective enough for AARA-style reasoning.

\paragraph{\textbf{Contributions}}
This article makes the following three contributions:
\begin{itemize}
\item We propose a lightweight dependently-typed AARA for reasoning about the resource consumption of higher-order functional programs. Our type system is the first AARA-style type system that does not require linear/affine typing and can describe the precise behavior of curried functions.
\item We formalize our dependent type system and prove its soundness with respect to a cost-aware operational semantics.
\item We demonstrate the effectiveness of our type system on a suite of challenging examples. We also include a discussion towards automating our type system.
\end{itemize}

\section{Overview}
\label{sec:overview}

We call our new calculus \lang{}, where ``$\ms{amor}$'' and ``$\ms{na}$'' are short
for ``amortized'' and ``non-affine'' respectively.
In this section, we sketch the general idea of how \lang{} works and use a few motivating examples to demonstrate its difference from prior type systems for AARA-style resource analysis and verification.

\subsection{Prior Work: Automatic Amortized Resource Analysis}
\label{Se:BackgroundAARA}

Automatic Amortized Resource Analysis (AARA) is a technique first introduced by Hofmann and Jost~\cite{ma03static} as a type system for deriving linear worst-case bounds on the heap-space consumption of first-order functional programs with eager evaluation strategy.
As surveyed by Hoffmann and Jost~\cite{JMSCS:HJ22}, dozens of works extended AARA to support different resource metrics, evaluation strategies, resource bounds, and language features, making AARA a state-of-the-art methodology for resource analysis and verification.
At the core of AARA is the \emph{potential method} for amortized complexity analysis, which was proposed
by Tarjan~\cite{JADM:Tarjan85} to manually derive an upper bound on the resource consumption of a sequence of
operations.
To apply the potential method to analyze general programs, one needs to specify potential functions
that map program states to non-negative numbers, such that the potential at every possible program
state is sufficient to pay for the cost of the next state transition as well as the potential of the
next state.
AARA introduces type annotations to encode such potential functions;
thus, an AARA-style type system must statically verify the type annotations in a program
to ensure the correct application of the potential method.

Typical AARA typing judgements take the form $\Gamma \vdashpq{p}{q} e:A$ with $p, q \ge 0$,
which reads as:
under type context $\Gamma$ and with $p$ units of potential, the expression $e$ has the type $A$ with a return of $q$ units of potential.
As we discussed in \cref{Se:Introduction}, the type $A$ and those types in $\Gamma$ are \emph{resource-annotated},
e.g., $\tlist{\tint^1}$ which means a list that carries one unit of potential per list element.
It has been shown that the close coupling of potential functions and data structures renders AARA effective and automatable~\cite{ma03static,POPL:JHL10,PLDI:KWP19}.
Because types carry resources, it is natural for AARA type systems to adopt \emph{linear} or \emph{affine} typing.
Below are some canonical typing rules for AARA, where $A \xrightarrow{p/q} B$ denotes a resource-annotated function type with $p$/$q$ as the pre-/post-potentials of the function:
\begin{mathpar}\small
  \Rule{AARA:Var}
  { }
  { x : A \vdashpq{0}{0} x : A }
  ~~
  \Rule{AARA:Nil}
  { p \ge 0 }
  { \cdot \vdashpq{0}{0} \mi{nil} : \tlist{A^p} }
  ~~
  \Rule{AARA:Cons}
  { p \ge 0 }
  { x_h : A, x_t : \tlist{A^p} \vdashpq{p}{0} \mi{cons}(x_h,x_t) : \tlist{A^p} }
  \and
  \Rule{AARA:Let}
  { \Gamma_1 \vdashpq{p}{r} e_1 : A_1 \\
    \Gamma_2, x:A_1 \vdashpq{r}{q} e_2 : A_2
  }
  { \Gamma_1,\Gamma_2 \vdashpq{p}{q} \mi{let}\;x = e_1\;\mi{in}\;e_2 : A_2 }
  ~~
  \Rule{AARA:Abs}
  { \Gamma, x : A \vdashpq{p}{q} e : B ~~ {\color{ACMDarkBlue}\text{$\Gamma$ does not carry any potentials}} }
  { \Gamma \vdashpq{0}{0} \lambda x.\, e : A \xrightarrow{p/q} B  }
\end{mathpar}
Rules \textsc{(AARA:Nil)} and \textsc{(AARA:Cons)} describe how to store potentials in a list;
for example, to build $\mi{cons}(x_h,x_t)$ of type $\tlist{\tint^p}$ carrying $p$ units of potential per list element, one needs $p$ units of potential
(for $x_h$) and a $x_t$ that has type $\tlist{\tint^p}$.
Rule \textsc{(AARA:Let)} indicates that the type system is linear:
one needs two different contexts $\Gamma_1$ and $\Gamma_2$ to check $e_1$ and $e_2$,
thus forbidding $e_1$ and $e_2$ to use the same variable.\footnote{This is not a severe limitation, because AARA can use \emph{sharing} to split a variable into two, with the understanding that the carried potentials are also split.}
Rule \textsc{(AARA:Abs)} is one of the most non-trivial rules: the side condition (marked in \textcolor{ACMDarkBlue}{blue})
is essential because AARA typically treats functions as \emph{copyable} objects, i.e., a function
can be applied an arbitrary number of times.
If the side condition were discarded, the function body $e$ would then be able to consume potentials stored in $\Gamma$
to pay for the costs inside $e$; thus, multiple applications of the function would consume the same piece of potentials multiple times, resulting in unsoundness.
Therefore, the rules above can \emph{not} verify the resource bound of the \emph{curried} \texttt{append} function shown in \cref{Se:Introduction},
because it would require a typing judgement like $x : \tlist{\tint^1} \vdash \lambda y.\, e_{\ms{append}} : \tlist{\tint^0} \xrightarrow{0/0} \tlist{\tint^0}$, where the context \emph{does} carry potentials.

Several efforts have been made to relax the limitation of \textsc{(AARA:Abs)}~\cite{POPL:HDW17,PLDI:KWP19,vi21unify}.
Hoffmann et al.~\cite{POPL:HDW17} adapt a stack-based typing principle, which essentially uncurries function applications if needed.
Below shows two typing rules of Hoffmann et al.~\cite{POPL:HDW17}'s system for function definitions, where a stack $\Sigma$ is used in the typing judgement $\Sigma;\Gamma \vdash  e : T$ to record all the argument types:
\begin{mathpar}\small
  \Rule{AARA:Stack:AbsPop}
  { \Sigma;\Gamma \vdash \lambda x.\, e : T }
  { \cdot;\Gamma \vdash \lambda x.\, e : \Sigma\rightarrow T }
\hva \and
  \Rule{AARA:Stack:AbsPush}
  { \Sigma;\Gamma,x:T_1 \vdash \lambda x.\, e : T }
  { T_1::\Sigma;\Gamma \vdash \lambda x.\, e : \Sigma\rightarrow T }
\end{mathpar}
By the interaction of the \textsc{(AARA:Stack:AbsPop)} and \textsc{(AARA:Stack:AbsPush)}, one becomes able to uncurry a higher-order type like $T_1\to\cdots\to T_n \to T$ into a bracket function type $[T_1,\cdots,T_n] \to T$.
Under this approach, the limitation of \textsc{(AARA:Abs)} is clearly handled by introducing all the $n$ variables at the same time, thus
curried functions like \texttt{append} can have a sound type annotation like $[\tlist{\tint^1},\tlist{\tint^0}]\to\tlist{\tint^0}$, instead of the unsound annotation $\tlist{\tint^1}\to\tlist{\tint^0}\to\tlist{\tint^0}$.
However, this approach sets us aside from one of the most powerful tools---partial application---that higher-order functions can offer.

On the other hand, Knoth et al.~\cite{PLDI:KWP19} and Rajani et al.~\cite{vi21unify} adopt an affine type system with \emph{multiplicities}, which bound the number a function can be applied.
%
%
Take Knoth et al.~\cite{PLDI:KWP19}'s system as an example, function types in their works has the form similar to $m \cdot (A \xrightarrow{p/q} B)$, where $m$ is a non-negative integer for the multiplicity.
Note that we switch to a slightly different notation for multiplicities, compared with the example shown in \cref{Se:Introduction}.
Intuitively, $\infty \cdot (A \xrightarrow{p/q} B)$ has the same meaning of the function type justified by \textsc{(AARA:Abs)},
whereas $1 \cdot (A \xrightarrow{p/q} B)$ is similar to the function type in a usual linear type system, in the sense that such function can only be applied once.
The typing rule for function definitions could become the one shown below, where $m \cdot \Gamma$ constructs
a context with the same bindings as $\Gamma$, but with $m$ times of the potentials in $\Gamma$:
\begin{mathpar}\small
  \Rule{AARA:Abs:Multi}
  { \Gamma, x : A \vdashpq{p}{q} e : B }
  { m \cdot \Gamma \vdashpq{0}{0} \lambda x.\, e : m \cdot (A \xrightarrow{p/q} B) }
\end{mathpar}
With the rule above, it is possible to verify the curried \texttt{append} function has
the type $\infty \cdot (  \tlist{\tint^1} \xrightarrow{0/0} 1 \cdot ( \tlist{\tint^{0}} \xrightarrow{0/0} \tlist{\tint^0} )   )$, which we showed in \cref{Se:Introduction}.

However, as we discussed in \cref{Se:Introduction}, the linear or affine nature of AARA type systems
makes it quite rigid to handle higher-order functions.
(Recall the \texttt{app\_par} example in \cref{Se:Introduction} which partially applies \texttt{append}
to obtain a closure and later uses the closure twice.)
In \cref{sec:overview:thiswork}, we show how our \lang{} tackles the problem.

\subsection{This Work: Non-Affine Dependently-Typed AARA}
\label{sec:overview:thiswork}


The first observation on AARA-style type systems is that the affine typing is necessary for dealing with potential-carrying types.
Most prior systems neglect the desire to separate potentials from types, despite that they actually feature
such a separation in some parts.
For example,
in the typing judgement $\Gamma\vdashpq{p}{q}\,e:T$, you get $p$ and $q$ as input/output \emph{constant} potentials, which stand aside from context $\Gamma$ and type $T$.
However, such a separation is not enough because potentials associated with data structures like $\tlist{\tint^1}$ can \emph{not} be expressed by a constant.
To decouple potentials completely from types, in \lang{}, we extend typing judgements of the form $\Gamma\vdashpq{p}{q}\,e:T$ to the form
\[
\tjudge{\Omega}{\Gamma}{f_1}{e}{f_2}{T}{x} \;,
\]
which replaces constant $p,q$ with potential functions $f_1,f_2$ and reads as: under potential context $\Omega$ (which we will explain later in this section), type context $\Gamma$ with $f_1$ units of input potential, the expression $e$ has the type $T$, whose result is bound by a local binder $x$ and carries an output potential of $f_2$ units.
Thus in our design, $f_1$ and $f_2$ will carry all the potentials used and remaining,
while the potential context $\Omega$ and type context $\Gamma$ carry \emph{no} potentials. 

Then, a typical typing judgement in our system
\[
\cdot \mid y : \tlist{\tint} \mid \mi{length}(y) \vdash \ms{id}\;y : [\mi{length}(x)]_x \tlist{\tint}
\]
may be read as: with a variable $y$ of a list type in the context and $\mi{length}(y)$ units of potential, the expression $\ms{id}\;y$ can return a result of a list type with $\mi{length}(x)$ units of potential, where $x$ binds the result of the expression.
The type-level potential function $\mi{length}(\cdot)$ serves as the resource annotation $\tlist{\tint^1}$ in usual AARA systems.
Since the type-level $\mi{length}(\cdot)$ function can be applied to program variables, our
\lang{} system is naturally \emph{dependently} typed.
%
%
Thus, one would define the type-level $\mi{length}(\cdot)$ function as an ordinary function:
\begin{lstlisting}
$\mi{length}$ = fun x. case x of nil => 0 | cons(x0,x1) => 1 + $\mi{length}$ x1
\end{lstlisting}
It should be noted that type-level functions should guarantee \emph{termination}, in order to render the type system sound.
There have been many techniques in the community of dependent typing, such as measures~\cite{PLDI:PKS16},
well-founded recursion~\cite{POPL:SHK16}, and primitive recursion on inductive datatypes~\cite{knoth20liquid}.
In our design of \lang{}, we do not stick to any specific design of type-level functions; instead,
we focus on the idea that \emph{what capability our type system can offer if we can carry out
arithmetic reasoning on type-level potential functions}.
Thereby in the rest of the section, we assume we can perform arithmetic reasoning in the type system.

%

Because of the dependent nature of \lang{}, arrow types become $\arrow{x}{f_1}{A}{f_2}{B}{y}$ with binders $x$ and $y$, which bind the argument and result and can be referred to in potential functions $f_1$ and $f_2$.
%
%
It is now possible  to describe the curried \texttt{append}'s behavior in \lang{} as follows, where $matd()$ stands for the case analysis for inductive datatypes adopted from Knoth et al.~\cite{knoth20liquid}'s work:
\[
\begin{array}{lrcll}
T_\ms{append} &\defeq& \arrow{x}{0}{\tlist{\tint}}{0}{(\arrow{y}{\mi{length}(x)}{\tlist{\tint}}{0}{\tlist{\tint}}{})}{}\;\\
\ms{append} &\defeq&\fix{\ms{append}}{T_\ms{append}}{\lam{x}{\lam{y}{matd(x,\{nil(x_0).\; y,cons(x_0,x_1).\\
&&\tick{1}{(cons(x_0,(\ms{append}~x_1~y)))}\})}}}
\end{array}
\]
Notably, such a type precisely characterizes the cost behavior of the \emph{curried} \texttt{append}:
applying \texttt{append} with only one argument $x$ would not consume any resources, and
further applying the obtained closure with a second argument $y$ would consume $\mi{length}(x)$ units of resource, where $x$ is the captured first argument.
The key to this precise characterization is that we let the outermost binder in lambda abstraction capture all the potentials in the context:
\begin{mathpar}\small
\inferrule[(TAbs)]
{ \tjudge{\Omega}{\mctx{\Gamma}{x}{T_1}}{f_1}{e}{f_2}{T_2}{y}\\
z\notin \ms{dom}(\Omega)\cup\ms{dom}(\Gamma)
}
{ \tjudge{\Omega}{\Gamma}{0}{\llam{x}{e}{T_1}}{0}{\arrow{x}{f_1}{T_1}{f_2}{T_2}{y}}{z}}
\end{mathpar}
While this rule may look similar to previously shown rules \textsc{(AARA:Abs)} or \textsc{(AARA:Abs:Multi)},
the exception is our \textsc{(TAbs)} does not require potential-free context restrictions or explicit multiplicities.
This is only viable in our \lang{} because all the contexts $\Omega$ and $\Gamma$ are guaranteed to carry $0$ potentials, and all the required potentials can be expressed and captured in the $f_1$ part.
This feature enables us to do \emph{compositional} reasoning about closures and partial applications, but yet a bit more elaboration on lambda applications is needed next.
The second observation on AARA-style type systems is that the \emph{multiplicity} is a remedy for the imprecise characterization of function behaviors.
Take the linear type $\infty \cdot (\tlist{\tint^1}\to 1 \cdot (\tlist{\tint^0}\rightarrow\tlist{\tint^0}))$ for \texttt{append} as an example.
The first argument requires $\mi{length}(\cdot)$ units of potential in advance while the consumption actually happens after the second application.
Thus a restriction of multiplicity $1$ is imposed on the second arrow because the first argument only ask for potentials that is enough for $1$ time of second application.
In practice, one may \emph{not} always be able to decide the number of usages in advance, thus multiplicity results in \emph{non-compositionality}: you have to change the type annotation (especially the multiplicity) by the actual context involved.

In our approach, we remove multiplicities to achieve \emph{compositionality}, but this removal does not come for free, especially in the case of lambda-applications.
Before proceeding to the concrete typing rule, let’s take an example to illustrate the challenge.
Suppose we want to subject an expression $e_1$ of type $\tlist{\tint}$ with $4\times\mi{length}(\ell_1)$ units of potential---where $\ell_1$ binds the evaluation result of $e_1$---to the function \texttt{append}.
We notice that in the first arrow of $T_\ms{append}$, the type requires $0$ units of potential.
Thus, we need to keep the $4 \times \mi{length}(\ell_1)$ units of potential for later use because there is no \emph{multiplicity} for us to restrict the number of usage of the second arrow.
In other words, $\ms{append}\;e_1$ results in a \emph{potential-capturing} closure:
\begin{mathpar}\small
  \inferrule*[right=(TApp?)]
  { \tjudge{\cdot}{\cdot}{0}{\ms{append}}{0}{T_\ms{append}}{} \\
    \tjudge{\cdot}{\cdot}{p_1}{e_1}{4 \times \mi{length}(\ell_1)}{\tlist{\tint}}{\ell_1}
  }
  { \tjudge{\cdot}{\cdot}{p_1}{ \ms{append}\;e_1 }{ 4 \times \mi{length}({\color{ACMRed}\ell_1}) }{ \arrow{y}{\mi{length}({\color{ACMRed}x})}{\tlist{\tint}}{0}{\tlist{\tint}}{} }{} }  
\end{mathpar}
However, the typing judgement shown above is problematic:
the variables $\color{ACMRed}\ell_1$ and $\color{ACMRed}x$ used in the result type are \emph{not} bound.
Intuitively, the application should be safe to proceed.
One possible workaround---adapted by Knoth et al.~\cite{PLDI:KWP19,knoth20liquid}---is to require programs to be in \emph{A-Normal-Form} (ANF); that is, the application arguments must be a variable or a value.
For example, if $e_1$ is a variable $z$ that can be located in the context, we can substitute both $\ell_1$ and $x$ with $z$ to make the type valid.
Another approach is to introduce \emph{existential} types~\cite{PLDI:PKS16,POPL:BVJ24};
for example, even if $e_1$ is not a variable or a value, one can assign
$\ms{append}\;e_1$ with the type $\exists z : \tlist{\tint}.\, \bigl[4 \times \mi{length}(z)\bigr ] \arrow{y}{\mi{length}(z)}{\tlist{\tint}}{0}{\tlist{\tint}}{} $,
which uses an existential binder $z$ for the evaluation result of $e_1$.

In this work, we aim to target a more flexible language, thus we do not want to restrict
our language to allow only ANF programs.
On the other hand, introducing existential binders would complicate the type system, which in itself
is worth a separate study (as shown by Borkowski et al.~\cite{POPL:BVJ24}).
%
%
%
As mentioned in \cref{Se:Introduction}, in our \lang{} system, we propose a lightweight approach:
we introduce a \emph{global potential context} $\Omega$ to keep track of those existentially-bound potential-carrying variables.
Below shows a derivation for the partial application $\ms{append}\;e_1$ in \lang{}:
\begin{mathpar}\small
  \inferrule*[right=(TApp)]
  { \tjudge{\cdot}{\cdot}{0}{\ms{append}}{0}{T_\ms{append}}{} \\
    \tjudge{\cdot}{\cdot}{p_1}{e_1}{4 \times \mi{length}(\ell_1)}{\tlist{\tint}}{\ell_1}
  }
  {{{\color{ACMDarkBlue}x:\tlist{\tint}}}~|~{\cdot}~|~{p_1}\vdash{ \ms{append}\;e_1 }:\\\qquad[{ 4 \times \mi{length}({\color{ACMDarkBlue}x}) }]{ \arrow{y}{\mi{length}({\color{ACMDarkBlue}x})}{\tlist{\tint}}{0}{\tlist{\tint}}{} }{} }  
\end{mathpar}
As you can see, a substitution of $\ell_1$ with $x$ is triggered and the global potential context is extended with a binding of $x$.
%
%
The typing judgement reads as follows: after the application of \texttt{append} on $e_1$ carrying $4 \times \mi{length}(\ell_1)$ units of potential with $\ell_1$ bound to $e_1$'s result, $\ell_1$ is then instantiated to the variable $x$ while $x$ is introduced as an existential variable to the global potential context.
%
%
We can further apply the result of $\ms{append}\;e_1$ to another expression $e_2$ of type $\tlist{\tint}$ with no potentials; the typing derivation is shown below:
\begin{mathpar}\small
\mprset{flushleft}
  \inferrule*[right=(TErase)]
  {\inferrule*[right=(TApp)]
  {  {x:\tlist{\tint}}~|~{\cdot}~|~{p_1}\vdash{ \ms{append}\;e_1 }:\\\qquad[{ 4 \times \mi{length}({x}) }]{ \arrow{y}{\mi{length}({x})}{\tlist{\tint}}{0}{\tlist{\tint}}{} }{} \\
    \tjudge{\cdot}{\cdot}{p_2}{e_2}{0}{\tlist{\tint}}{}
  }
  {{x:\tlist{\tint},\;{\color{ACMDarkBlue}y:\tlist{\tint}}}~|~{\cdot}~|~{p_1 + p_2}\vdash{\ms{append}\;e_1\;e_2}:\\\qquad[{3 \times \mi{length}(x)}]{\tlist{\tint}}{} } 
  }
  { \tjudge{x:\tlist{\tint}}{\cdot}{p_1 + p_2}{\ms{append}\;e_1\;e_2}{3 \times \mi{length}(x)}{\tlist{\tint}}{} }
\end{mathpar}
Similar to the previous use of the \textsc{(TApp)} rule, the application of $\ms{append}\;e_1$ to
$e_2$ introduces another existential binding $y:\tlist{\tint}$.
The difference is that now the application needs $\mi{length}(x)$ units of potential
to proceed.
Because the closure returned by $\ms{append}\;e_1$ carries $4 \times \mi{length}(x)$ units of potential, it is sound to make the application and leave $3 \times \mi{length}(x)$ units of potential to the next.
In the derivation, we also demonstrate the use of the \textsc{(TErase)} rule, which removes
unnecessary existentially-bound variables from the global potential context.
In this example, $y$ is no longer needed so we can safely remove it.


We can now show \lang{}'s typing rule for lambda applications:
\begin{mathpar}\small
\inferrule[(TApp)]
{\tjudge{\Omega}{\Gamma}{f_1}{e_1}{f_2}{(\arrow{x}{f_3}{T_1}{f_4}{T_2}{y})}{z}\\
\tjudge{\Omega}{\Gamma}{f_5}{e_2}{f_6}{T_1}{w}\\
\pjudge{\mctx{\Omega}{x}{T_1}}{\Gamma}{f_3[w\mapsto x]}{\leq}{(f_2 + f_6)[w\mapsto x]}\\
}
{\tjudge{\mctx{\Omega}{x}{T_1}}{\Gamma}{f_1+f_5}{\app{e_1}{e_2}}{(f_2 + f_6-f_3)[w\mapsto x]+f_4}{T_2}{y}}
\end{mathpar}
The first two premises include the typing judgements for $e_1$ and $e_2$, respectively.
Then for the application, we need to instantiate all the potential associated with local binder $w$ for $e_2$, to the local binder $x$ for the argument type in $e_1$.
The third premise checks whether the potentials carried by $e_1$ and $e_2$ are sufficient for the application or not.
Finally, we store the unused potentials in $(f_2 + f_6-f_3)[w\mapsto x]+f_4$, and add an existential binder $x$ to the global potential context.
In this way, we can derive $\tjudge{x:\tlist{\tint},\,y:\tlist{\tint}}{\cdot}{p_1+p_2}{\ms{append}\;e_1\;e_2}{3 \times \mi{length}(x)}{\tlist{\tint}}{}$ as shown earlier in this section.

With the lambda-abstraction rule and application rule provided, we can finally derive a typing judgement for the curried recursive function \texttt{append}.
Let
\begin{align*}
\Gamma_0 & \defeq   \ms{append}:T_\ms{append},\,x:\tlist{\tint},\,y:\tlist{\tint} \,, \\
\Gamma_1 & \defeq \Gamma_0,\,x_0:\tint,\,x_1:\tlist{\tint}\, .
\end{align*}
We have the following derivation for the sub-expression that corresponds to the $cons$ case. The notable thing here is how the potential is carried along with the sequential applications of $x_1$ and $y$:
\begin{mathpar}\small
\mprset{flushleft}
  \inferrule*[Right=(TTickp)]
  {\inferrule*[Right=(TCons)]
  {\inferrule*[Right=(TErase)]
  {\inferrule*[Right=(TApp)]
  {\inferrule*[Right=(TApp)]
  {\tjudge{\cdot}{\Gamma_1}{0}{\ms{append}}{0}{T_\ms{append}}{}\\\tjudge{\cdot}{\Gamma_1}{\mi{length}(x_1)}{x_1}{\mi{length}(x_1)}{\tlist{\tint}}{x_1}
  }
  {\tjudge{x:\tlist{\tint}}{\Gamma_1}{\mi{length}(x_1)}{\ms{append}~x_1\\\qquad}{\mi{length}(x)}{\arrow{y}{\mi{length}(x)}{\tlist{\tint}}{0}{\tlist{\tint}}{}}{}}
  }
  {{x:\tlist{\tint},y:\tlist{\tint}}|{\Gamma_1}|{\mi{length}(x_1)}\vdash{\ms{append}~x_1~y}:[{0}]{\tlist{\tint}}}
  }
  {\tjudge{\cdot}{\Gamma_1}{\mi{length}(x_1)}{\ms{append}~x_1~y}{0}{\tlist{\tint}}{}}
  }
  {\tjudge{\cdot}{\Gamma_1}{\mi{length}(x_1)}{cons(x_0,(\ms{append}~x_1~y))}{0}{\tlist{\tint}}{}
  }
  }
  { \tjudge{\cdot}{\Gamma_1}{\mi{length}(x_1)+1}{\tick{1}{cons(x_0,(\ms{append}~x_1~y))}}{0}{\tlist{\tint}}{} }
\end{mathpar}
We then proceed to the case analysis as well as the fixed-point definition:
\begin{mathpar}\small
\mprset{flushleft}
  \inferrule*[Right=(TFix)]
  {\inferrule*[Right=(TAbs)]
  {\inferrule*[Right=(TAbs)]
  {
  {\inferrule*[Right=(TDes)]
  {\cdots \\\tjudge{\cdot}{\Gamma_0,\,x_0:{\tunit}}{0}{y}{0}{\tlist{\tint}}{}\\\tjudge{\cdot}{\Gamma_0}{\mi{length}(x)}{x}{\mi{length}(x)}{\tlist{\tint}}{x}
  }
  {\tjudge{\cdot}{\Gamma_0}{\mi{length}(x)}{matd(\cdots)}{0}{\tlist{\tint}}{}
  }}}{\cdot~|~{\ms{append}:T_\ms{append},\,x:{\tlist{\tint}}}~|~{0}\vdash{\lam{y}{\cdots}{}}:\\\qquad[{0}]{(\arrow{y}{length(x)}{\tlist{\tint}}{0}{\tlist{\tint}}{})}}}
  {\tjudge{\cdot}{\ms{append}:T_\ms{append}}{0}{\lam{x}{\lam{y}{}}\cdots}{0}{T_\ms{append}}{}}
  }
  { \tjudge{\cdot}{\cdot}{0}{\fix{\ms{append}}{T_\ms{append}}{\lam{x}{\lam{y}{}}\cdots}}{0}{T_\ms{append}}{} }
\end{mathpar}
Let us conclude using the higher-order example \texttt{app\_par} introduced in \cref{Se:Introduction}.
%
%
We can reimplement \texttt{app\_par} as follows using lambda abstraction and application:
\[
\app{ (\lam{z}{\pair{\app{z}{\ell_2}}{\app{z}{\ell_3}}}) }{(\app{\ms{append}}{\ell_1})}
\]
To show the capability of \lang{} handling higher-order functions, we let $\ell_1$ carrying $2\times \mi{length}(\ell_1)$ units of potential, while $\ell_2$ and $\ell_3$ carry no potentials.
Let $T_{z}$ be $\arrow{y}{\mi{length}(x)}{\tlist{\tint}}{0}{\tlist{\tint}}{}$, i.e., the type that the argument $z$ is supposed to have.
The typing derivation for the $\lam{z}{\cdots}{}$ part in \lang{} is presented below:
\begin{mathpar}\small

 \inferrule*[Right=(TAbs)]
{
\inferrule*[Right=(TPair)]
{
 \inferrule*[Right=(TApp)]
{\tjudge{x:\tlist{\tint}}{z:T_{z}}{0}{z}{0}{\arrow{y}{\mi{length}(x)}{\tlist{\tint}}{0}{\tlist{\tint}}{}}{}}
{\tjudge{x:\tlist{\tint}}{z:T_{z}}{\mi{length}(x)}{\app{z}{\ell_2}}{0}{\tlist{\tint}}{}}
}
{
\tjudge{x:\tlist{\tint}}{z:T_{z}}{2 \times \mi{length}(x)}{\pair{\app{z}{\ell_2}}{\app{z}{\ell_3}}}{0}{\prodT{\tlist{\tint}}{\tlist{\tint}}}{}
}
}
{{x:\tlist{\tint}}|{\cdot}|{0}\vdash{\lam{z}{\pair{\app{z}{\ell_2}}{\app{z}{\ell_3}}}{}}:[{0}]{\arrow{z}{2\times \mi{length}(x)}{T_z}{0}{\prodT{\tlist{\tint}}{\tlist{\tint}}}{}}{}
}
\end{mathpar}
Next, we derive a typing judgement for the partial application $\ms{append}\;\ell_1$:
\begin{mathpar}\small
  \inferrule*[Right=(TApp)]
{\tjudge{\cdot}{\cdot}{p}{\ell_1}{2\times \mi{length}(l)}{\tlist{\tint}}{l}}
{{x:\tlist{\tint}}~|~{\cdot}~|~{p}\vdash{\app{\ms{append}}{\ell_1}}:\\\qquad[{2\times \mi{length}(x)}]{\arrow{y}{\mi{length}(x)}{\tlist{\tint}}{0}{\tlist{\tint}}{}}{}}
\end{mathpar}
Finally, we can compositionally reason about the resource consumption of the \texttt{app\_par} example:
\begin{mathpar}\small
  \inferrule*[Right=(TApp)]
{\cdots\\\cdots}
{
  \inferrule*[Right=(TErase)]
{{x:\tlist{\tint}}|{\cdot}|{p}\vdash{\app{ (\lam{z}{\pair{\app{z}{\ell_2} }{\app{z}{\ell_3}}}{} ) }{(\app{\ms{append}}{\ell_1})}}:[{0}]{\prodT{\tlist{\tint}}{\tlist{\tint}}}{}}
{\tjudge{}{\cdot}{p}{\app{ (\lam{z}{\pair{\app{z}{\ell_2}}{\app{z}{\ell_3}}} )}{(\app{\ms{append}}{\ell_1})}}{0}{\prodT{\tlist{\tint}}{\tlist{\tint}}}{}}
}
\end{mathpar}
That is, $p$ is an upper bound on the resource consumption if $p$ units of potential are sufficient to build
the list $\ell_1$ with $2 \times \mi{length}(\ell_1)$ units of potential.



\section{Technical Details}
\label{sec:details}
In this section, we will give an overall description of the syntax and semantics of \lang{}, along with sketches of the type soundness proof under cost semantics. While examples and automation of \lang{} will be later introduced at \cref{sec:eval} and \cref{Se:Discussion}.

\subsection{Syntax}
 {\small
    \begin{align*}
    &\textsc{Types} &T\Coloneqq&~  \Int ~|~ \prodT{T_1}{T_2} ~|~ \arrow{x}{f_1}{T_1}{f_2}{T_2}{y} ~\\&&&|~ \poly{x}{T_1}{T_2} ~|~ \ind   \\
    &\textsc{Expression} &e\Coloneqq&~   x ~|~ i ~|~ \op{i}{e} ~|~ \Lam{x}{e}{T} ~|~ \llam{x}{e}{T} ~ |~\app{e_1}{e_2} ~|~ \pair{e_1}{e_2} ~\\&&&|~ \fst~e~|~\snd~e~|~\fix{x}{T}{e}~|~\tick{i}{e}~|~~\Let{x}{e_1}{e_2}  \\
   &&&~|~\con{i}{e_0}{e_1}{e_{m_i}}~|~\des{C}{m}{e}\\
    &\textsc{Pre-Values} &pv\Coloneqq&~  x ~|~ i ~|~ \op{i}{pv} ~|~ \Lam{x}{e}{T} ~|~ \llam{x}{e}{T}~|~\pair{pv_1}{pv_2}~\\\
&&&|~ \con{i}{pv_0}{pv_1}{pv_{m_i}} \\
    &\textsc{Values} &v\Coloneqq&~   i ~|~ \Lam{x}{e}{T} ~|~ \llam{x}{e}{T}~|~\pair{v_1}{v_2}~|~ \con{i}{v_0}{v_1}{v_{m_i}} \\
    &\textsc{Type Context} &\Gamma\Coloneqq&~   \cdot ~|~  \mctx{\Gamma}{x}{T} \\
    &\textsc{Potential Context} &\Omega\Coloneqq&~   \cdot ~|~  \mctx{\Omega}{x}{T}
  \end{align*}
  }%
\paragraph{\textbf{Types}} The types and terms in our calculus are a combination of AARA~\cite{ma03static} and Liquid Resource Type~\cite{knoth20liquid}. We choose $\Int$ to serve as basic types to do arithmetic reasoning, while other resource types such as $\mathbf{R}$ work just as well in our system. $\prodT{T_1}{T_2}$ and $\poly{x}{T_1}{T_2}$ are standard product types and dependent types. Our arrow type $\arrow{x}{f_1}{T_1}{f_2}{T_2}{y}$ is a generalization from AARA arrow type $A\stackrel{p/q}{\longrightarrow}B$. $f_1$ and $f_2$ are the generalized potential resource input/output from $p$ and $q$, while $x$ and $y$ specify potential variables bound on the input and output potential function $f_1$ and $f_2$. Here $f_1$, $f_2$, $T_1$ and $T_2$ can depend on $x$ while only $f_2$ and $T_2$ can depend on $y$. $\ind$ is a simplified inductive data type adapted from prior work~\cite{knoth20liquid,POPL:HDW17}. $C$ stands for the name of the constructor, $T$ is the type of content, and $m$ is the number of copies of the inductive type itself. Such as $\tlist{\tint}$ is represented as $\tlist{\tint}=ind(\{nil(Unit,0),cons(\Int,1)\})$, and $\ttree{\tint}$ is represented as $\ttree{\tint}=ind(\{leaf(\Int,0),node(Unit,2)\})$.
\begin{figure}[thb]
\begin{center}
\begin{mathpar}\small
\inferrule[(PConst)]
{\Omega\cap\Gamma=\emptyset}
{
\pjudge{\Omega}{\Gamma}{c}{}{}
}
\hva \and
\inferrule[(PInt)]
{\Omega\cap\Gamma=\emptyset\\
x:\Int\in\Omega\cup\Gamma}
{
\pjudge{\Omega}{\Gamma}{x}{}{}
}
\hva \and
\inferrule[(POp-Linear)]
{\mathbf{op~is~linear}\\\\
\forall i,~\pjudge{\Omega}{\Gamma}{f_i}{}{}\\
}
{\pjudge{\Omega}{\Gamma}{\op{}{f_i}}{}{}}
\hva \and
\inferrule[(POp-Non-Linear)]
{\mathbf{op~is~non~linear}\\\\
\forall i,~\pjudge{\Omega_i}{\Gamma_i}{f_i}{}{}\\\\
\forall i,j,~i\neq j\rightarrow \ms{dom}(\Omega_i)\cap\ms{dom}(\Gamma_j)=\ms{dom}(\Omega_i)\\\\\cap\ms{dom}(\Omega_j)=\ms{dom}(\Gamma_i)\cap\ms{dom}(\Gamma_j)=\emptyset\\
}
{\pjudge{\bigcup\limits_i\Omega_i}{\bigcup\limits_i\Gamma_i}{\op{}{f_i}}{}{}}
~~~~
\inferrule[(PPair)]
{\forall i\in [1,2],x_i\notin \ms{dom}(\Omega)\cup\ms{dom}(\Gamma)\\\\
\\ x:\prodT{T_1}{T_2}\in\Omega\cup\Gamma,\qquad\forall i\in [1,2],\\\\
\pjudge{\mctx{\Omega\backslash x}{x_i}{T_i}}{\Gamma\backslash x}{f_i[x\mapsto (x_1,x_2)]}{}{}\\
}
{
\pjudge{\Omega}{\Gamma}{\sum\limits_i f_i}{}{}
}
\hva \and
\inferrule[(PCons)]
{\forall i,\forall j\in [1,m_i],x_{ij}\notin \ms{dom}(\Omega)\cup\ms{dom}(\Gamma)\\ x:\ind\in\Omega\cup\Gamma\\
\forall i,\forall j\in [1,m_i],\pjudge{\mctx{\mctx{\Omega\backslash x}{x_{ij}}{T_{ij}}}{f}{\Int}}{\Gamma\backslash x}{f_{ij}[x\mapsto \con{i}{x_0}{x_1}{x_{m_i}}]}{}{}\\
}
{
\pjudge{\Omega}{\Gamma}{fix~f.\;\destructor{x}{C}{m_i}{\sum\limits_{j\in m_i} f_{ij}}}{}{}
}
\end{mathpar}
\end{center}
\caption{Potential Functions}
\label{fig:potential}
\end{figure}

\paragraph{\textbf{Expressions}}
Our expressions are based on the standard lambda calculus (including type abstraction, fixpoints, and product types), extended with cost semantics and recursive data structures. We annotate lambda abstractions $\llam{x}{e}{T}$ with the argument type $T$ to enable an algorithmic typing version, as discussed in \Cref{Se:Discussion}. However, these annotations can be omitted in practice, as all typing, evaluation, and soundness results hold regardless of their presence. Indeed, we omit them in the presentation of examples for readability.
The cost construct $\tick{i}{e}$ expresses the consumption or release of $i$ units of resource before evaluating the expression $e$—consuming resources when $i$ is positive and releasing them when $i$ is negative. A typical example of resource release is memory deallocation: once a program releases ownership of a block of memory, that portion of the resource becomes available for reuse, which corresponds to $\tick{-1}$.
For recursive datatypes, the term $\con{i}{e_0}{e_1}{e_{m_i}}$ denotes the $i$-th constructor, and $\des{C}{m}{e}$ represents a case analysis over the  constructors of datatype $C$ applied to expression $e$. We also introduce the arithmetic operation $\op{i}{e}$ to support the definition and computation of potential functions.

\paragraph{\textbf{Values and Pre-Values}}
Values are the normal forms that cannot be further reduced. Our values are standard, but we also include pre-values in our calculus. Pre-values are values with variables and arithmetic operations. We need these pre-values because we want to subject some potential functions using the potential abstraction $\Lam{X}{e}{T}$. Dependent type systems with fix-points will not be decidable, so here we subject the restricted forms as pre-values into potential polymorphism $\Lam{X}{e}{T}$, as shown in the typing rule \textsc{(TPapp)} rule in \Cref{fig:typing1}.

\paragraph{\textbf{Type Context and Potential Context}}
As shown in the \cref{Se:Introduction} and \cref{sec:overview}, there are two kinds of contexts, type context $\Gamma$ and potential context $\Omega$. Type context $\Gamma$ captures bound variables in the $\lambda$ abstraction and fix-points, while potential context $\Omega$ captures existential global variables to bind potential functions. All the variables used in expressions or potentials function should appear in either type context or potential context, and common variables in type context and potential context should share the same type.

\paragraph{\textbf{Potential functions}}
Potential functions $f$ are introduced to decouple potentials from types, especially from recursive data types. In our context, potential functions are just the primitive recursions that map pre-values to $\Int$. There is a set of rules associated with potential functions. $\pjudge{\Omega}{\Gamma}{f}{}{}$ is the well-formedness judgements, requiring all the variables in $f$ appears in the context $\Omega$ or $\Gamma$. Besides, we have the inequality judgement $\pjudge{\Omega}{\Gamma}{f_1}{\le}{f_2}$ is the inequality judging, requires $f_1\le f_2$ under all instantiation of free variables by values $v$, to be solved by Linear Arithmetic Solver (with only linear operations), or other SMT solvers (with non-linear operations). The primitive recursion under our context only proceeds through product types $\prodT{T_1}{T_2}$ and inductive types $\ind$, but have constant values on other types, similar to the \emph{measures} used in some liquid type systems~\cite{knoth20liquid,PLDI:PKS16}.

\subsection{Typing Rules}
Next we can introduce the type system of \lang{}. The typing judgement $\tjudge{\Omega}{\Gamma}{f_1}{e}{f_2}{T}{x}$ reads as follows, under the potential context $\Omega$, typing context $\Gamma$ and with an input potential function $f_1$, we can have expression $e$ as type $T$, with an output potential function $f_2$. Besides, the output potential function $f_2$ has a local binder $x$ on the type $T$, suggesting $f_2$ and $T$ can depend on $x$. We sometimes omit the local binder $x$ if $x$ does not appear in $f_2$ and $T$, but here all the local binders are explicitly provided in the typing rules. Local binders can be variables that already appears in the context, as rule \textsc{(TVar)} shows, but for the other cases, we use a fresh variable to bind
the result. The most important thing to notice in \lang{}'s typing is the interaction between input/output potential functions, type structures and local binders.

\begin{figure}[thb]

\begin{center}
\begin{mathpar}\small
\inferrule* [lab=(TTickp),narrower=1]
{\tjudge{\Omega}{\Gamma}{f_1}{e}{f_2}{T}{x}\\
p\ge 0
}
{\tjudge{\Omega}{\Gamma}{f_1+p}{\tick{p}{e}}{f_2}{T}{x}}
\hva \and
\inferrule* [lab=(TTickn)]
{\tjudge{\Omega}{\Gamma}{f_1-p}{e}{f_2}{T}{x}\\
p < 0
}
{\tjudge{\Omega}{\Gamma}{f_1}{\tick{p}{e}}{f_2}{T}{x}}
\hva \and
\inferrule[(TInt)]
{ x\notin \ms{dom}(\Omega)\cup\ms{dom}(\Gamma)}
{\tjudge{\Omega}{\Gamma}{\fctx{0}}{i}{0}{\Int}{x}}
~~~~
\inferrule* [lab=(TDrop)]
{\tjudge{\Omega}{\Gamma}{f_1}{e}{f_2}{T}{x}\\\\
\pjudge{\mctx{\Omega}{x}{T}}{\Gamma}{f_3}{\leq}{f_2}\\
}
{
\tjudge{\Omega}{\Gamma}{f_1}{e}{f_3}{T}{x}
}
~~~~
\inferrule* [lab=(TFix)]
{z\notin \ms{dom}(\Omega)\cup\ms{dom}(\Gamma)\\\\
x\notin \mathbf{typefv}(e)\cup\mathbf{fv}(T)\\\\
\tjudge{\Omega}{\mctx{\Gamma}{x}{T}}{0}{e}{0}{T}{y}\\
}
{\tjudge{\Omega}{\Gamma}{0}{\fix{x}{T}{e}}{0}{T}{z}}
\hva \and
\inferrule[(TRename)]
{x,y\notin \ms{dom}(\Omega)\cup\ms{dom}(\Gamma)\\\\
\tjudge{\Omega}{\Gamma}{f_1}{e}{f_2}{T}{x}\\
}
{\tjudge{\Omega}{\Gamma}{f_1}{e}{f_2[x\mapsto y]}{T[x\mapsto y]}{y}\\}
\hva \and
\inferrule[(TProj1)]
{y\notin \ms{dom}(\Omega)\cup\ms{dom}(\Gamma)\\\\
\tjudge{\Omega}{\Gamma}{f_1}{e}{f_2}{\prodT{T_1}{T_2}}{x}\\\\
\pjudge{\mctx{\Omega}{x}{\prodT{T_1}{T_2}}}{\Gamma}{f_3[y\mapsto \fst x]}{\leq}{f_2}\\
}
{\tjudge{\Omega}{\Gamma}{f_1}{\fst{e}}{f_3}{T_1}{y}}
\hva \and
\inferrule[(TRelax)]
{\tjudge{\Omega}{\Gamma}{f_1}{e}{f_2}{T}{x}\\\\
\pjudge{\Omega}{\Gamma}{f_3}{\ge}{0}\\
}
{
\tjudge{\Omega}{\Gamma}{f_1+f_3}{e}{f_2+f_3}{T}{x}
}
~~~~
\inferrule* [lab=(TVar)]
{x:T \in \Gamma
}
{\tjudge{\Omega}{\Gamma}{0}{x}{0}{T}{x}}
\end{mathpar}
\end{center}
\caption{Typing Rules (Selected, full in Appendix)}
\label{fig:typing1}
\vspace{-20pt}
\end{figure}

Most of rules like rule \textsc{(TInt)}, rule \textsc{(TPair)}, or  rule \textsc{(TPabs)} are rather common since they do not involve potential changes. In \Cref{fig:typing1},
Rule \textsc{(TTickp)} and rule \textsc{(TTickn)} demonstrate the different behavior of consuming and releasing $p$ units of resource into the context. Rule \textsc{(TVar)} allows you to use variables non-affinely but set a local binder that has the same name of the variable while potentials can be assigned later with the rule \textsc{(TRelax)} or rule \textsc{(TDrop)}. Rule \textsc{(TRelax)} shows an increase on both the input and output resource by an equal amount is possible. Rule \textsc{(TDrop)} suggests the output resource can always be dropped without any side effects.
Rule \textsc{(TFix)} shows a standard rule for fixpoints, except we prohibit appearance of $x$ instead the types and potentials of $e$ to avoid self pointing. Finally rule \textsc{(TErase)} helps erase unnecessary variables in potential contexts, while \textsc{(TRename)} helps rename local binder that does not appear in context.

\begin{figure}[thb]

\begin{center}
\begin{mathpar}\small
\inferrule[(TAbs)]
{ \tjudge{\Omega}{\mctx{\Gamma}{x}{T_1}}{f_1}{e}{f_2}{T_2}{y}\\
z\notin \ms{dom}(\Omega)\cup\ms{dom}(\Gamma)
}
{ \tjudge{\Omega}{\Gamma}{0}{\llam{x}{e}{T_1}}{0}{(\arrow{x}{f_1}{T_1}{f_2}{T_2}{y}}{z})}
\hva \and
\inferrule[(TApp)]
{\tjudge{\Omega}{\Gamma}{f_1}{e_1}{f_2}{(\arrow{x}{f_3}{T_1}{f_4}{T_2}{y})}{z}\\
\tjudge{\Omega}{\Gamma}{f_5}{e_2}{f_6}{T_1}{w}\\
\pjudge{\mctx{\Omega}{x}{T_1}}{\Gamma}{f_3[w\mapsto x]}{\leq}{(f_2 + f_6)[w\mapsto x]}\\
}
{\tjudge{\mctx{\Omega}{x}{T_1}}{\Gamma}{f_1+f_5}{\app{e_1}{e_2}}{(f_2 + f_6-f_3)[w\mapsto x]+f_4}{T_2}{y}}
\hva \and
\inferrule[(TLet)]
{\tjudge{\Omega}{\Gamma}{f_1}{e_1}{f_2}{T_1}{z}\\
\tjudge{\Omega}{\mctx{\Gamma}{x}{T_1}}{f_3}{e_2}{f_4}{T_2}{y}\\
\pjudge{\Omega}{\mctx{\Gamma}{x}{T_1}}{f_3}{\leq}{f_2[z\mapsto x]}\\
}
{\tjudge{\mctx{\Omega}{x}{T_1}}{\Gamma}{f_1+f_5}{\Let{x}{e_1}{e_2}}{f_2[z\mapsto x] -f_3+f_4}{T_2}{y}}
\hva \and
\inferrule[(TCons)]
{y\notin \ms{dom}(\Omega)\cup\ms{dom}(\Gamma)\\
\tjudge{\Omega}{\Gamma}{f_1}{e_0}{f_2}{T_i}{x_0}\\
\forall j\in[1,m_i],\tjudge{\Omega}{\Gamma}{f_{3_j}}{e_j}{f_{4_j}}{\ind}{x_j}\\
\pjudge{\Omega,\,x_0:T_i,...,\,x_{m_i}:\ind}{\Gamma}{f_5[y\mapsto \con{i}{x_0}{x_1}{x_{m_i}}]}{\leq}{f_2+\sum\limits_{j=1}^{m_i} f_{4_j}}{}\\
}
{\tjudge{\Omega}{\Gamma}{f_1 + \sum\limits_{j=1}^{m_i} f_{3_j}}{\con{i}{e_0}{e_1}{e_{m_i}}}{f_5}{\ind}{y}}
\hva \and
\inferrule[(TDes)]
{\tjudge{\Omega}{\Gamma}{f_1}{e_0}{f_2}{\ind}{x}\\
\forall i,\forall j\in[1,m_i],x_j\notin \mathbf{fv}(f_3)\cup\mathbf{fv}(T_1)\\
\forall i,\tjudge{\Omega}{\Gamma,\,x_0:T_i,...,\,x_{m_i}:\ind}{f_2[x\mapsto \con{i}{x_0}{x_1}{x_{m_i}}]}{e_i}{f_3}{T_1}{y}
}
{\tjudge{\Omega}{\Gamma}{f_1}{\des{C}{m}{e}}{f_3}{T_1}{y}}
\end{mathpar}

\end{center}
\caption{Typing Rules (Selected, full in Appendix)}
\label{fig:typing2}
\end{figure}

Followed by the second set of typing rules in \Cref{fig:typing2}, the most notable one is rule \textsc{(TAbs)} and rule \textsc{(TApp)}. For rule \textsc{(TAbs)}, we pack all the potential function $f_1$ at this layer into the 
abstraction $\arrow{x}{f_1}{T_1}{f_2}{T_2}{y}$. This design is only possible under the separation of types and potentials, and helps us gain \emph{compositionality} in reasoning about higher-order functions. For Rule \textsc{(TApp)}, there is an instantiation from $w$ to $x$ because $e_2$ will be the actual $x$ in the context. When doing an application, we check whether the potential $f_2$ carried by the function itself and the potential $f_6$ carried by the parameter suffices or not, and then put the residual potential into the result. The rule \textsc{(TLet)} is analogous to a combination of \textsc{(TAbs)} and rule \textsc{(TApp)}. The rule \textsc{(TCons)} and rule \textsc{(TDes)} work just as rules \textsc{(TPair)}, \textsc{(TProj1)} and \textsc{(TProj2)}. The immediate consequence of such typing rules is that the AARA type system is embedable in our system, where we delayed the proof in Appendix.
\begin{theorem}[AARA embedding]
For $\Gamma \vdashpq{p}{q} e:A$, with $Pure(\cdot)$ maps a AARA type context
to a potential free \lang{} type context, $\Phi(\cdot)$ maps a AARA type  context to
the potential it carries, $h(\cdot)$ a non-trivial mapping from a AARA term to a \lang{} term, 
we have that there exists a potential function
$\Phi_x'(A)$ such that
$\pjudge{Pure(\Gamma)}{\cdot}{q+\Phi_x(A)}{\le}{\Phi_x'(A)}$ and
\[
\tjudge{Pure(\Gamma)}{\cdot}{\Phi(\Gamma)+p}{h(e)}{\Phi_x'(A)}{Type(A)}{x}.
\]
\end{theorem}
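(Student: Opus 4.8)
The plan is to prove the embedding by structural induction on the derivation of the AARA judgement $\Gamma \vdashpq{p}{q} e : A$. Before starting the induction I would fix the auxiliary maps by recursion on AARA types. The pure-type map erases annotations, e.g.\ $Type(\tlist{B^{r}}) = \tlist{Type(B)}$ and $Type(A_1 \xrightarrow{r/s} A_2) = \arrow{x}{\Phi_x(A_1)+r}{Type(A_1)}{\Phi_y(A_2)+s}{Type(A_2)}{y}$, so that an affine arrow becomes a \lang{} arrow whose pre-/post-potentials absorb the constants $r,s$. The potential map $\Phi_x(\cdot)$ reads an annotation back as a potential function: $\Phi_x(\tlist{B^{r}})$ is the primitive recursion summing $r$ for each list element together with the elements' own potential, which is well-formed by the \textsc{(PCons)} rule of \cref{fig:potential}, while $\Phi_x(A_1 \xrightarrow{r/s} A_2) = 0$ because the basic \textsc{(AARA:Abs)} rule forbids functions from storing potential at their type. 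The context maps then distribute pointwise, with $\Phi(\Gamma) = \sum_{(y:B)\in\Gamma}\Phi_y(B)$. The term map $h$ is the identity on constructors and eliminators; its only genuinely non-trivial action is to collapse AARA's explicit \emph{sharing} (the duplication of a potential-carrying variable, split across $\Gamma_1,\Gamma_2$) into ordinary non-affine reuse of a single variable, which is sound precisely because \lang{} is non-affine.

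With these maps fixed, the induction maintains the invariant that the translated input potential is $\Phi(\Gamma)+p$ and the translated output is some $\Phi_x'(A)$ dominating $q+\Phi_x(A)$; establishing the domination inequality and the target typing judgement simultaneously is exactly what lets the induction close. The base cases \textsc{(AARA:Var)}, \textsc{(AARA:Nil)}, and the constant rules reduce to \textsc{(TVar)}, the relevant \textsc{(TCons)} instance, and \textsc{(TInt)}, each followed where necessary by \textsc{(TRelax)} and \textsc{(TDrop)} to reconcile the constants $p,q$. For \textsc{(AARA:Cons)} I would apply the induction hypothesis to the premises and recombine with \textsc{(TCons)}, discharging its well-formedness obligation using the AARA side condition $r\ge 0$. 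The first genuinely structural case is \textsc{(AARA:Let)}: AARA splits the context as $\Gamma_1,\Gamma_2$, whereas \lang{} reuses a single context non-affinely. Since $\Phi$ is additive, $\Phi(\Gamma_1,\Gamma_2)=\Phi(\Gamma_1)+\Phi(\Gamma_2)$, so I would route $\Phi(\Gamma_1)+p$ through $h(e_1)$ and feed the intermediate constant together with $\Phi(\Gamma_2)$ into $h(e_2)$, matching the potential threading of \textsc{(TLet)}; the inequality premise of \textsc{(TLet)} is discharged by the inductive domination on the type of $e_1$.

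The main obstacle I expect is the pair of function rules. For \textsc{(AARA:Abs)} I must check that packing the body's input potential into the arrow via \textsc{(TAbs)} reconstructs exactly the translated type $Type(A \xrightarrow{p/q} B)$. This is where the \emph{``$\Gamma$ carries no potential''} side condition is indispensable: it forces $\Phi(\Gamma)=0$, so the only potential the translated body may consume is the $p$ that \textsc{(TAbs)} captures into the arrow's input-potential slot $\Phi_x(A)+p$, and no captured context potential leaks into the closure. Dually, for AARA application I would instantiate \textsc{(TApp)} and verify that its binder substitution $[w\mapsto x]$ and its sufficiency premise are met by the pre-/post-potential accounting of the two AARA sub-judgements. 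The delicate point, and the reason the statement asserts the inequality $q+\Phi_x(A)\le\Phi_x'(A)$ rather than an equality, is that reconciling AARA's constant-potential discipline with \lang{}'s potential-function discipline forces repeated uses of \textsc{(TRelax)} and \textsc{(TDrop)}, which can only guarantee that the constructed output potential dominates the intended $q+\Phi_x(A)$. Finally, I would isolate as a routine side lemma, proved by induction on AARA types, that every $\Phi_x(A)$ arising in the construction is a well-formed potential function in the sense of \cref{fig:potential}; this lemma is what legitimizes every potential inequality obligation $f\le g$ invoked above.
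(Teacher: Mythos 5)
Your proposal matches the paper's proof in all essentials: the same definitions of $Type(\cdot)$, $\Phi_x(\cdot)$ (with arrow types carrying zero potential), and additive $\Phi(\Gamma)$; the same induction on the AARA derivation maintaining $\Phi(\Gamma)+p$ as input and a dominating $\Phi_x'(A)$ as output; the same use of the potential-free-context side condition to force $\Phi(\Gamma)=0$ in the abstraction case; and the same treatment of sharing by identifying the two copies of a variable, which the paper makes precise via an AARA-level substitution lemma (plus a small weakening lemma for the structural rules). The only cosmetic difference is that the paper's $h$ additionally inserts $\mathit{tick}$ constructs to encode the per-construct cost constants of its chosen AARA variant, which does not alter the structure of the argument.
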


\subsection{Operational Semantics}
The evaluation rule $\ectx{e_1}{p}\hookrightarrow \ectx{e_2}{q}$ follows the line of resource-aware small-step semantics~\cite{PLDI:KWP19,knoth20liquid,LICS:GKH23,OOPSLA:WWC17} without changing anything in substance, with full version in Appendix. The evaluation contexts contain constant potentials $p$ and $q$, instead of using the potential function $f_1$ and $f_2$. The only restriction for potential $p$ and $q$ in the context is they have to be non-negative. Among all the terms, the only operation consumes potential here is the $\tick{p}{e}$, while other operations just pass the potential to the next. With the potential $p$ and $q$ in the evaluation context, and input potential function $f_1$ in the typing judgement, now we can set up our soundness lemma.

\subsection{Soundness}

Next, we are going to prove the type soundness of the calculus. The proof for progress is standard since the only evaluation rule that consume potential is the $\tick{i}{e}$, while other cases either do a substantial reduction like rule \textsc{(EProj1)} or rule \textsc{(EProj2)}, or do a substructural reduction like rule \textsc{(EPair1)} and rule \textsc{(EPair2)}. For substantial reductions other than $\tick{i}{e}$, they do not consume potential thus can always proceed. For substructural reductions, everything needed is a weakening lemma. Here we leave the full proof in Appendix.
\begin{lemma}[Progress Weakening]\label{lemma:progress-weakening-1}
If $\ectx{e}{p}\hookrightarrow\ectx{e'}{q}$, and $p\le p'$, then there exists q' such that $\ectx{e}{p'}\hookrightarrow\ectx{e'}{q'}$
\end{lemma}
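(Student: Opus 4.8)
The plan is to proceed by induction on the derivation of the step $\ectx{e}{p}\hookrightarrow\ectx{e'}{q}$. The decisive observation, already flagged in the surrounding text, is that the magnitude of the potential in the configuration gates a reduction only through the positive-\emph{tick} rule; every other rule either leaves the potential untouched or simply threads it unchanged through a subderivation. I would therefore strengthen the statement to the sharper invariant that the witness can be taken to be $q' = q + (p'-p)$, i.e.\ the surplus $p'-p$ is carried along verbatim. This strengthened form makes the induction self-feeding: in each case I must only reconstruct the \emph{same} reduction rule from the larger potential and check that the surplus is preserved, rather than merely asserting existence.

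The case analysis then splits three ways. First, for the \emph{potential-neutral} substantial reductions (beta reduction, projection $\fst/\snd$, \textbf{let}, constructor destruction $\des{C}{m}{e}$, and arithmetic), the rule carries no side condition on $p$ and yields $q=p$; the identical rule fires from $p'$, giving $q'=p'=q+(p'-p)$ and preserving non-negativity trivially. Second, for the \emph{congruence} rules (EPair1, EPair2, and the analogous evaluation-context steps that reduce inside a subterm), the derivation arises from a subderivation $\ectx{e_1}{p}\hookrightarrow\ectx{e_1'}{q}$; I apply the induction hypothesis to obtain a step from $p'$ with surplus $q'=q+(p'-p)$, then reassemble via the same congruence rule, relying on the fact that these rules impose no constraint on the potential and propagate it globally. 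Third, the two \emph{tick} cases: for $\tick{i}{e}$ with $i\ge 0$ the guard is $p\ge i$ and $q=p-i$, and since $p'\ge p\ge i$ the same rule fires from $p'$ with $q'=p'-i=q+(p'-p)\ge q\ge 0$; for $i<0$ the potential is released, $q=p-i>p$, and no lower-bound guard is in play, so the step again fires from $p'$ with the surplus intact.

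The only genuine obstacle—and it is mild—is the positive-\emph{tick} guard together with the standing requirement that configuration potentials stay non-negative. The point to verify carefully is that this guard is \emph{monotone in the available potential}: if $p\ge i$ then $p'\ge i$ for every $p'\ge p$, so enlarging the potential can never disable a step. Consequently the non-negativity of $q'$ follows directly from $q'=q+(p'-p)$ together with $q\ge 0$, and no reduction that was previously enabled becomes blocked. I expect the remaining bookkeeping—matching each concrete evaluation rule against the generic schema above—to be entirely routine once the strengthened invariant is fixed.
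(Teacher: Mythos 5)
Your proposal is correct and follows essentially the same route as the paper: induction on the step derivation, observing that only the \emph{tick} rule places a guard on the configuration potential and that this guard is monotone in the available potential (the paper compresses this to ``linearity of addition''). Your explicit strengthening to the witness $q' = q + (p'-p)$ is exactly the invariant the paper's one-line argument implicitly relies on, so the two proofs coincide in substance.
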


\begin{theorem}[Progress]\label{lemma:progress}
If $\tjudge{\Omega}{\cdot}{f}{e}{g}{T}{x}$ and $\pjudge{\Omega}{\cdot}{f}{\le}{p}$, then e is a value or exists e' q,  $\ectx{e}{p}\hookrightarrow\ectx{e'}{q}$.
\end{theorem}

The preservation proof is technically more involved than the relatively straightforward progress theorem. The difficulty arises from the fact that variables in the potential context $\Omega$ are \emph{erased} as evaluation proceeds. As illustrated in the example in \Cref{sec:overview}, suppose $v_1$ and $v_2$ are both values, and the function $\ms{append}$ captures a variable $x$. Then we may derive a typing judgment of the form:
\begin{mathpar}\small
\tjudge{x:\tlist{\tint}}{\cdot}{p_1 + p_2}{\ms{append}\;v_1\;v_2}{3 \times \mi{length}(x)}{\tlist{\tint}}{}
\end{mathpar}

However, after evaluation, the result $v_1 \doubleplus v_2$ no longer refers to $x$. We would instead expect the result typing to be:
\begin{mathpar}\small
\tjudge{\cdot}{\cdot}{p_1 + p_2}{v_1 \doubleplus v_2}{3 \times \mi{length}(v_1)}{\tlist{\tint}}{}
\end{mathpar}

To establish preservation between such pre- and post-evaluation typings, we must relate the two potential contexts. The insight is that evaluation continues to substitute program variables with values in function applications. In this example, the variable $x$ captured by $\ms{append}$ is eventually instantiated with the value $v_1$. In our system, such substitution applies not only to term variables but also to existential variables inside potential functions. Preservation is established via continuous instantiation of potential variables. Concretely, substituting $x$ with $v_1$ in the potential expression ensures a valid post-evaluation typing.

To formalize this idea, we introduce the substitution notation $\Omega[x \mapsto v]$, which denotes replacing all occurrences of $x$ with $v$ in $\Omega$ and removing the binding $x : T$. This substitution mechanism allows us to formally state and prove the preservation theorem, which can be found in Appendix. Please note that this preservation lemma also applies to the cases where no substitution happens, where we can use a fresh variable $y$ so that $\Omega[y\mapsto v]=\Omega$, $g[y\mapsto v]=g$, $T[y\mapsto v]=T$ and $(q-f)[y\mapsto v]=q-f$.

\begin{lemma}[Value Substitution]\label{lemma:substitution-1}
If $\tjudge{\Omega}{\mctx{\Gamma_1}{x}{T_0},\,\Gamma_2}{f_1}{e}{f_2}{T_1}{y}$, and $\tjudge{\Omega}{\Gamma_1}{0}{v}{0}{T_0}{}$, then $\tjudge{\Omega[x\mapsto v]}{\Gamma_1,\Gamma_2[x\mapsto v]}{f_1[x\mapsto v]}{e[x\mapsto v]}{(f_2[x\mapsto v])}{T_1[x\mapsto v]}{y}$.
\end{lemma}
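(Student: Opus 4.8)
The plan is to prove the lemma by structural induction on the derivation of $\tjudge{\Omega}{\mctx{\Gamma_1}{x}{T_0},\,\Gamma_2}{f_1}{e}{f_2}{T_1}{y}$. Before the induction I would establish two auxiliary substitution facts for the potential-function judgments, since essentially every substantive typing rule discharges a side condition of the form $\pjudge{\cdots}{\cdots}{g}{\le}{g'}$ and relies on well-formedness $\pjudge{\cdots}{\cdots}{g}{}{}$. Concretely: if $\pjudge{\Omega}{\mctx{\Gamma_1}{x}{T_0},\,\Gamma_2}{g}{}{}$ and $v$ has type $T_0$, then $\pjudge{\Omega[x\mapsto v]}{\Gamma_1,\Gamma_2[x\mapsto v]}{g[x\mapsto v]}{}{}$; and the inequality $\pjudge{\cdots}{\cdots}{g_1}{\le}{g_2}$ is preserved because, by the solver semantics, the inequality holds under \emph{all} value-instantiations of the free variables, so fixing $x := v$ merely restricts that quantification. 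I would also record the standard commutation fact $(g[w\mapsto z])[x\mapsto v]=(g[x\mapsto v])[w\mapsto z]$, valid whenever $w,z$ are chosen distinct from $x$ and do not occur in $v$; this freshness we guarantee throughout by $\alpha$-renaming every binder introduced by a rule away from $x$ and from the free variables of $v$. Note that although $x$ is drawn from $\Gamma$, it may also occur inside the types and potentials recorded in $\Omega$ (shared variables are permitted), so $\Omega[x\mapsto v]$ genuinely rewrites those components and drops the binding of $x$ when present.

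The routine cases are the administrative and structural rules---\textsc{(TInt)}, \textsc{(TTickp)}, \textsc{(TTickn)}, \textsc{(TDrop)}, \textsc{(TRelax)}, \textsc{(TRename)}, \textsc{(TErase)}, \textsc{(TProj1)}/\textsc{(TProj2)}, \textsc{(TPair)}, and \textsc{(TFix)}---where the induction hypothesis applies directly to the premises and the conclusion is reassembled after observing that $[x\mapsto v]$ commutes with the rule, the side conditions being handled by the auxiliary potential facts above. The one base case that deserves care is \textsc{(TVar)}: if the variable being typed is $x$ itself, then $e[x\mapsto v]=v$, and since the binder in \textsc{(TVar)} aliases the variable name, I would first invoke \textsc{(TRename)} to separate the result binder from $x$; the goal then follows from the hypothesis $\tjudge{\Omega}{\Gamma_1}{0}{v}{0}{T_0}{}$ after weakening into the larger contexts and noting that $T_0$ cannot depend on $x$, so $T_0[x\mapsto v]=T_0$ and $f_2[x\mapsto v]=0$. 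If the variable is some other $x'$, substitution leaves it untouched and the result is immediate.

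The substantive cases are \textsc{(TApp)}, \textsc{(TLet)}, \textsc{(TCons)}, and \textsc{(TDes)}, because these manipulate the global potential context and perform their own local-binder instantiations. In \textsc{(TApp)}, the conclusion introduces a fresh existential binder---call it $z$ (the rule's argument binder, renamed away from $x$ and from $\mathbf{fv}(v)$)---into $\Omega$ and uses the instantiation $[w\mapsto z]$ on $f_3$ and on $f_2+f_6$. Applying the induction hypothesis to the two premises and the potential-substitution fact to the inequality side condition (now under the augmented context $\mctx{\Omega}{z}{T_1}$), the desired judgment is recovered by pushing $[x\mapsto v]$ through $[w\mapsto z]$ via the commutation fact, so that the output potential $(f_2+f_6-f_3)[w\mapsto z]+f_4$ and the result type $T_2$ are substituted consistently; \textsc{(TLet)} is wholly analogous. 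For \textsc{(TCons)} and \textsc{(TDes)} the branch premises extend $\Gamma$ with the constructor-argument variables $x_0,\dots,x_{m_i}$, placed after $x$, so the induction hypothesis applies with $\Gamma_2$ suitably extended and with $v$ (typed in the unchanged prefix $\Gamma_1$) requiring no weakening; the pattern substitution $[x'\mapsto\con{i}{x_0}{x_1}{x_{m_i}}]$ into the branch potentials must then be commuted past $[x\mapsto v]$, which is routine once the pattern variables are chosen fresh.

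The main obstacle is the bookkeeping around the potential context $\Omega$: one must show that value substitution commutes uniformly with both the \emph{introduction} of existential binders in \textsc{(TApp)}/\textsc{(TLet)}/\textsc{(TCons)} and their \emph{erasure} in \textsc{(TErase)}, all while preserving the invariant that $\Omega$ and $\Gamma$ agree on the types of common variables and are otherwise disjoint after the rewrite $\Omega[x\mapsto v]$. This is where the interaction of dependent types, decoupled potentials, and the global existential context is most delicate: a single careless choice of bound-variable names would cause capture between $x$ (or the variables of $v$) and the freshly introduced existentials, invalidating the commutation steps on which every substantive case rests. I expect the bulk of the genuine work to lie in the precise statement and proof of the simultaneous potential-function substitution fact together with these commutation lemmas; once those are in hand, the typing cases reduce to mechanical reassembly.
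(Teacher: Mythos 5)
Your overall strategy---induction on the typing derivation, supported by substitution facts for the potential judgments (well-formedness, and preservation of $\le$ because the inequality is quantified over all value instantiations) and by commutation of substitutions---is the same as the paper's, and most of your case analysis matches its proof. There is, however, one concrete gap: the corner case in which the result's local binder coincides with the substituted variable $x$. This arises from \textsc{(TVar)}, whose conclusion $\tjudge{\Omega}{\Gamma}{0}{x}{0}{T}{x}$ deliberately reuses $x$ as the binder, and it propagates upward: a subsequent \textsc{(TRelax)} can produce an output potential $f_2+f_3$ that genuinely depends on $x$ simultaneously as a context variable and as the binder. Your proposed repair---apply \textsc{(TRename)} first to separate the binder from $x$---is blocked by that rule's side condition $x,y\notin \ms{dom}(\Omega)\cup\ms{dom}(\Gamma)$, since here $x\in\ms{dom}(\Gamma)$. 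And your classification of \textsc{(TRelax)} as routine is exactly where the difficulty hides: after substitution you must exhibit a typing of the value $v$ with input potential $f_3[x\mapsto v]$ and an output potential, in the binder $x$, that specializes to it, and this does not follow from the induction hypothesis or from your two auxiliary potential facts.

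The paper closes this gap with a dedicated auxiliary lemma (``Value Relaxing''): if $\pjudge{\Omega}{\Gamma}{f}{}{}$ and $x\notin\Omega\cup\Gamma$, then $\tjudge{\Omega}{\Gamma}{f[x\mapsto v]}{v}{f}{T}{x}$, proved by induction on the structure of $v$ using the fact that potential functions only recurse through product and inductive types, so for non-decomposable values the binder's identity is irrelevant. It invokes this lemma precisely in the \textsc{(TRelax)} case, and handles \textsc{(TVar)} via a separate ``Value Strengthening'' lemma rather than by renaming. Adding these two facts to your toolkit and routing the binder-coincidence case through them instead of through \textsc{(TRename)} would make your argument go through; the rest of your proof, including the explicit substitution/commutation lemmas for potential judgments and the freshness discipline for the existential binders introduced by \textsc{(TApp)} and \textsc{(TLet)}, is consistent with---and somewhat more detailed than---what the paper actually writes down.
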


\begin{theorem}[Preservation]\label{lemma:preservation}
If $\tjudge{\Omega}{\cdot}{f}{e}{g}{T}{x}$ and $\ectx{e}{q}\hookrightarrow\ectx{e'}{q'}$, then there exists $f'$ $y$ $v$, such that $\tjudge{\Omega[y\mapsto v]}{\cdot}{f'}{e}{g[y\mapsto v]}{T[y\mapsto v]}{x}$, and we have that $\pjudge{\Omega[y\mapsto v]}{\cdot}{(q-f)[y\mapsto v]}{\le}{q'-f'}$.
\end{theorem}

\section{Case Studies}
\label{sec:eval}

In this section, we make remarks on the generality of our approach, and show the inference process for several useful examples. We have shown the complete inference process for \texttt{append} and \texttt{app\_par} in \cref{sec:overview}, and next are \texttt{traverse}, \texttt{curry}, \texttt{sort} and \texttt{map\_append}.
Due to the limit of space, we put \texttt{curry}, \texttt{sort} in the Appendix and show the inference process of \texttt{traverse} and \texttt{map\_append} here.


\subsection{List Traverse} We can first take a simple example: the traversal of $\tlist{\tint}$. The memory consumption of \texttt{traverse} accumulates as it proceeds through the list, and it will release the memory once the traversal is finished.
\begin{align*}\small
&\ms{traverse}&::&\qquad\arrow{x}{\mi{length}(x)}{\tlist{\tint}}{\mi{length}(y)}{\tlist{\tint}}{y}\\
&\ms{traverse}& \defeq &\qquad\fix{\ms{traverse}}{x}{matd(x,\{nil(x_0).nil(x_0),\\
&&&\qquad cons(x_0,x_1).\tick{1}{(\tickl{(-1)}{cons(x_0,\app{\ms{traverse}}{x_1})})}\})}
\end{align*}
We apply an eager semantics for our $\tick{p}{e}$, that is: consuming $p$ units of resource before evaluating the expression $e$. However, for list traversal where we want to release the memory resource after the evaluation, we encode the semantics using the following semantics:
\begin{align*}\small
&\tickl{p}{e}=(\lam{x}{(\tick{p}{x})})~e
\end{align*}
In such a case, the inference process for $\tickl{-1}{e}$ with the judgement
$\tjudge{\Omega}{\Gamma}{f_1}{e}{f_2}{\tlist{\tint}}{y}$ where $y$ is a local variable will be as follows:
\begin{mathpar}\small
\mprset{flushleft}
  \inferrule*[Right=(TTickn)]
{
\tjudge{\Omega}{\Gamma,\,x:\tlist{\tint}}{1}{x}{1}{\tlist{\tint}}{x}
}
{
  \inferrule*[Right=(TAbs)]
{
\tjudge{\Omega}{\Gamma,\,x:\tlist{\tint}}{0}{\tick{-1}{x}}{1}{\tlist{\tint}}{x}
}
{
  \inferrule*[Right=(TApp)]
{
\tjudge{\Omega}{\Gamma}{0}{\lam{x}{\cdots}}{0}{\arrow{x}{0}{\tlist{\tint}}{1}{\tlist{\tint}}{x}}{}
}
{
  \inferrule*[Right=(TErase)]
{\tjudge{\Omega,\,x:\tlist{\tint}}{\Gamma}{f_1}{\lam{x}{\cdots}~e}{f_2[y\mapsto x]+1}{\tlist{\tint}}{x}}
{
  \inferrule*[Right=(TRename)]
{\tjudge{\Omega}{\Gamma}{f_1}{\lam{x}{\cdots}~e}{f_2[y\mapsto x]+1}{\tlist{\tint}}{x}}
{
\tjudge{\Omega}{\Gamma}{f_1}{\lam{x}{\cdots}~e}{f_2+1}{\tlist{\tint}}{y}
}
}
}
}
}
\end{mathpar}
Next we want to type the \texttt{traverse}, let
\begin{align*}\small
T_{tvs}& \defeq{\arrow{x}{length(x)}{\tlist{\tint}}{length(y)}{\tlist{\tint}}{y}}\\
\Gamma_0 & \defeq   \ms{traverse}:T_{tvs},\,x:\tlist{\tint} \\
\Gamma_1 & \defeq \Gamma_0,\,x_0:\tint,\,x_1:\tlist{\tint}
\end{align*}
Then the inference process shows as follows:
\begin{mathpar}\small
\mprset{flushleft}
  \inferrule*[Right=(TApp)]
{\tjudge{\cdot}{\Gamma_1}{0}{\ms{traverse}}{0}{T_{tvs}}{}~\tjudge{\cdot}{\Gamma_1}{length(x_1)}{x_1}{length(x_1)}{\tlist{\tint}}{x_1}}
{
  \inferrule*[Right=(TCons)]
{\tjudge{x:\tlist{\tint}}{\Gamma}{length(x_1)}{\app{\ms{traverse}}{x_1}}{length(y)}{\tlist{\tint}}{y}
}
{
  \inferrule*[Right=(LazyTick)]
{\tjudge{\cdot}{\Gamma_0}{length(x_1)}{cons(\cdots)}{length(y)-1}{\tlist{\tint}}{y}
}
{
  \inferrule*[Right=(TTick)]
{
\tjudge{\cdot}{\Gamma_0}{length(x_1)}{\tickl{(-1)}{\cdots}}{length(y)}{\tlist{\tint}}{y}
}
{
  \inferrule*[Right=(TDes)]
{
\tjudge{\cdot}{\Gamma_0}{length(x_1)+1}{\tick{1}{\cdots}}{length(y)}{\tlist{\tint}}{y}
}
{
  \inferrule*[Right=(TFix)]
{
\tjudge{\cdot}{\Gamma_0}{length(x)}{matd(\cdots)}{length(y)}{\tlist{\tint}}{y}
}
{
\tjudge{\cdot}{\cdot}{0}{\fix{\ms{traverse}}{T_{tvs}}{\cdots}}{0}{T_{tvs}}{}
}
}
}
}
}
}
\end{mathpar}
One may notice a transition of output potential from $length(y)$ to $length(y)-1$ at rule \textsc{(TCons)}, and a change of input potential from $length(x_1)$ to $length(x)$ at rule \textsc{(TDes)}. These transition are natural according to the increase or decrease of $length()$ by the constructors and destructors.

\subsection{Map Append}
Our last example is a higher-order usage of \texttt{append}. We want to use \texttt{map} to apply the partial application of \texttt{append} to another list of list $\ell_2$
\begin{lstlisting}
map_append = map (append $\ell_1$) $\ell_2$
\end{lstlisting}
Where \texttt{map} is implemented as follows:
\begin{align*}\small
&&\fix{\ms{map}}{T_\ms{map}}{\lam{z}{\lam{w}{matd(w,\{nil(w_0).\; nil(w_0),\\&&cons(w_0,w_1).(cons(z~w_0,\ms{map}~z~w_1))\})}}}
\end{align*}
 Let the following be:
\begin{align*}\small
T_{map}&\defeq~{\arrow{z}{0}{(\arrow{y}{\mi{length}(x)}{\tlist{\tint}}{0}{\tlist{\tint}}{})}{0}{}{}}\\
&~~(\arrow{w}{length(x)\times length(w)}{\tlist{\tlist{\tint}}}{0}{\tlist{\tlist{\tint}}}{})\\
\Gamma_0 & \defeq   \ms{map}:T_{map},\,z:\arrow{y}{\mi{length}(x)}{\tlist{\tint}}{0}{\tlist{\tint}}{},\\&\qquad\,w:\tlist{\tlist{\tint}} \\
\Gamma_1 & \defeq \Gamma_0,w_0:\tlist{\tint},\,w_1:\tlist{\tlist{\tint}} \\
\Gamma_2 & \defeq l_1:\tlist{\tint},\,l_2:\tlist{\tlist{\tint}} \\
 f(x,y) & \defeq length(x)\times length(y)
\end{align*}
Then the inference process for \texttt{map} shows as follows:
\begin{mathpar}\small
\mprset{flushleft}
 \inferrule*[Right=(TApp)]
{
\tjudge{x:\tlist{\tint}}{\Gamma_1}{f(x,w_1)}{w_1}{f(x,w_1)}{\tlist{\tlist{\tint}}}{w_1}
}
{
  \inferrule*[Right=(TCons)]
{\tjudge{x:\tlist{\tint}}{\Gamma_1}{f(x,w_1)}{\ms{map}~z~w_1}{0}{\tlist{\tlist{\tint}}}{}}
{
  \inferrule*[Right=(TDes)]
{\tjudge{x:\tlist{\tint}}{\Gamma_1}{length(x)\\\times (length(w_1)+1)}{(cons(z~w_0,map~z~w_1))}{0}{\tlist{\tlist{\tint}}}{}}
{
  \inferrule*[Right=(TAbs)]
{\tjudge{x:\tlist{\tint}}{\Gamma_0}{f(x,w)}{matd(w,\cdots)}{0}{\tlist{\tlist{\tint}}}{}}
{
  \inferrule*[Right=(TFix)]
{\tjudge{x:\tlist{\tint}}{\,\ms{map}:T_{map}}{0}{\lam{x}{\lam{y}{\cdots}{}}{}}{0}{T_{map}}{}}
{\tjudge{x:\tlist{\tint}}{\cdot}{0}{\fix{\ms{map}}{T_{map}{\cdots}}{}}{0}{T_{map}}{}}
}
}
}
}
\end{mathpar}
Then we can type the \texttt{map\_append}:
\begin{mathpar}\small
\mprset{flushleft}
 \inferrule*[Right=(TApp)]
{
\tjudge{\cdot}{\Gamma_2}{f(l_1,l_2)}{l_1}{f(l_1,l_2)}{\tlist{\tlist{\tint}}}{l_1}
}
{
  \inferrule*[Right=(TApp)]
{{x:\tlist{\tint}}~|~{\Gamma_2}~|~{f(l_1,l_2)}\vdash{append~l_1}:\\\qquad[{f(x,l_2)}]{\arrow{y}{\mi{length}(x)}{\tlist{\tint}}{0}{\tlist{\tint}}{}}{}
}
{
  \inferrule*[Right=(TApp)]
{\begin{array}{ll}\tjudge{x:\tlist{\tint}}{\Gamma_2}{f(l_1,l_2)}{\ms{map}~(append~l_1)\\}{f(x,l_2)}{    \arrow{w}{f(x,w)}{\tlist{\tlist{\tint}}}{0}{\tlist{\tlist{\tint}}}{}}{}\end{array}}
{
  \inferrule*[Right=(TErase)]
{\begin{array}{ll}\tjudge{x:\tlist{\tint}}{\Gamma_2}{f(l_1,l_2)}{\ms{map}~(append~l_1)~l_2\\}{0}{\tlist{\tlist{\tint}}}{}\end{array}}
{\tjudge{\cdot}{l_1:\tlist{\tint},\,l_2:\tlist{\tlist{\tint}}}{length(l_1)\\\times length(l_2)}{\ms{map}~(append~l_1)~l_2}{0}{\tlist{\tlist{\tint}}}{}}
}
}
}
\end{mathpar}
Moveover, this program is not typable in AARA, with a proof delayed in Appendix.
\begin{theorem}
The example Map Append is not typable in AARA type system.
\end{theorem}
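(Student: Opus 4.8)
The plan is to argue by contradiction, exploiting the one structural feature shared by every affine AARA variant discussed above: all multiplicities and all potential coefficients occurring in a typing derivation are fixed nonnegative constants, chosen at type-checking time and independent of the runtime values of the inputs $\ell_1$ and $\ell_2$. First I would record the exact cost of the program. Since $\ms{map}$ itself performs no $\tick{}{}$, the entire cost of $\ms{map}\;(\ms{append}\;\ell_1)\;\ell_2$ comes from applying the closure $\ms{append}\;\ell_1$ once per element of $\ell_2$, each application prepending $\ell_1$ at cost $\mi{length}(\ell_1)$. Hence the true cost is the bilinear quantity $\mi{length}(\ell_1)\times\mi{length}(\ell_2)$, in which the two factors range over independent inputs. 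Soundness of AARA then requires the potential supplied to the initial configuration---determined by the annotations on $\ell_1$, on $\ell_2$, and on $\ms{append}$---to upper-bound this cost for every choice of $\ell_1,\ell_2$.

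The heart of the proof is a two-case analysis on the function type the derivation assigns to the partial application $\ms{append}\;\ell_1$, say $m\cdot(\tlist{\tint^{0}}\xrightarrow{p/q}\tlist{\tint^{0}})$. In the first case the closure is potential-free in its captured environment, which by the side condition of \textsc{(AARA:Abs)} is the only situation permitting the unbounded multiplicity $m=\infty$ that $\ms{map}$ demands of its argument. Then nothing from the captured $\ell_1$ is available to pay for the appends, so each application must draw its $\mi{length}(\ell_1)$ units from the constant $p$ or from the potential of its own argument, an element of $\ell_2$; either way a derivation-fixed budget must dominate $\mi{length}(\ell_1)$, which fails once $\ell_1$ is long enough (taking $\ell_2$ a single empty list already breaks it). In the second case the closure captures positive potential from $\ell_1$ and must therefore carry a finite multiplicity $m<\infty$, so it can be applied at most $m$ times before its captured $m\cdot\mi{length}(\ell_1)$ units are exhausted; but $\ms{map}$ applies it $\mi{length}(\ell_2)$ times, and choosing $\ell_2$ with $\mi{length}(\ell_2)>m$ forces a nonnegative potential to go negative, contradicting soundness. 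Since $m$ is fixed by the finite derivation while $\mi{length}(\ell_2)$ is unbounded, both branches yield a contradiction and no AARA typing can exist.

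The hard part will be making this case split genuinely exhaustive and robust against the bookkeeping moves---sharing, subtyping, and the shifting of constant potential between the function, its environment, and the list elements---that an AARA derivation may perform. To handle this I would first prove a normalization lemma pinning down the shape of any AARA type for $\ms{append}\;\ell_1$: because the closure is formed before $\ell_2$ is ever inspected, its captured potential is at most $m\cdot\mi{length}(\ell_1)$ for a derivation-fixed multiplicity $m$, a polynomial in $\mi{length}(\ell_1)$ alone that is provably independent of $\mi{length}(\ell_2)$. Comparing this ceiling against the required $\mi{length}(\ell_1)\times\mi{length}(\ell_2)$ then collapses the two informal cases into the single quantitative statement that no constant can bound $\mi{length}(\ell_2)$, which is where the contradiction becomes rigorous. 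I expect the interface with $\ms{map}$'s own type to require the most care: I must confirm that whatever type $\ms{map}$ receives, its operational behavior still applies the supplied closure exactly $\mi{length}(\ell_2)$ times, so that the counting argument tracks the real execution and the appeal to AARA soundness is legitimate.
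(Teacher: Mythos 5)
Your proof is correct in outline but takes a genuinely different route from the paper's. The paper fixes the AARA variant of the Hoffmann--Jost survey (no multiplicities; the \textsc{(AARA:Abs)} side condition forces a potential-free captured context), observes that the closure $\ms{append}\;e_1$ can therefore only be typed as $\cdot \vdashpq{p_1}{q_1} \ms{append}\;e_1 : L^{p_2}(\tint)\xrightarrow{p_3/q_3}L^{q_2}(\tint)$ with all annotations constant and independent of $e_1$, and then derives the contradiction from a \emph{single} application with $e_2=\mathsf{nil}$ and $\mi{length}(e_1)$ large: the output potential $q_2\cdot\mi{length}(e_1)$ cannot be covered by the constant input budget $p_1+p_3$. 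In particular the paper never reasons about $\ms{map}$'s iteration or the bilinear cost $\mi{length}(\ell_1)\times\mi{length}(\ell_2)$ at all; your Case~1 is essentially the paper's entire argument, recast in terms of cost rather than potential conservation (which incidentally avoids the paper's implicit assumption $q_2>0$). What your extra machinery buys is coverage of the multiplicity-extended variants (Knoth et al., Rajani et al.), which the paper's proof does not address: your Case~2, where a potential-capturing closure must carry a finite, derivation-fixed multiplicity $m$ that is then defeated by choosing $\mi{length}(\ell_2)>m$, has no counterpart in the paper. The price is that your ``normalization lemma'' pinning the captured potential to a derivation-fixed multiple of $\mi{length}(\ell_1)$, independent of $\ell_2$, must be proved against all the bookkeeping rules (sharing, relax, subtyping) --- a step the paper simply asserts for its simpler setting; provided you discharge it, both arguments are sound.
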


\section{Discussion: Automated Type Checking and Inference}
\label{Se:Discussion}

Our work focuses on formalizing a dependently-typed calculus with non-affine AARA
for resource analysis; thus, developing an algorithm for automatic resource-bound
inference is out of the scope of this article.
Nevertheless, in this section, we discuss possible pathways towards automation.

\begin{figure}[thb]

\begin{center}

\end{center}
\caption{Typing Rules}
\label{fig:typing7}
\end{figure}

\begin{figure}[thb]

\begin{center}
\begin{mathpar}\small

\inferrule[(EAppl)]
{\ectx{e_1}{p}\hookrightarrow \ectx{e_1'}{q}}
{\ectx{\app{e_1}{e_2}}{p}\hookrightarrow \ectx{\app{e_1'}{e_2}}{q}}
\hva \and
\inferrule[(EAppr)]
{\ectx{e_2}{p}\hookrightarrow \ectx{e_2'}{q}}
{\ectx{\app{v_1}{e_2}}{p}\hookrightarrow \ectx{\app{v_1}{e_2'}}{q}}
\hva \and
\inferrule[(EProj1)]
{ }
{\ectx{\fst\pair{v_1}{v_2}}{p}\hookrightarrow \ectx{v_1}{p}}
\hva \and
\inferrule[(EProj2)]
{ }
{\ectx{\snd\pair{v_1}{v_2}}{p}\hookrightarrow \ectx{v_2}{p}}
\hva \and
\inferrule[(EPair1)]
{\ectx{e_1}{p}\hookrightarrow \ectx{e_1'}{q}}
{\ectx{\pair{e_1}{e_2}}{p}\hookrightarrow \ectx{\pair{e_1'}{e_2}}{q}}
\hva \and
\inferrule[(EPair2)]
{\ectx{e_2}{p}\hookrightarrow \ectx{e_2'}{q}}
{\ectx{\pair{v_1}{e_2}}{p}\hookrightarrow \ectx{\pair{v_1}{e_2'}}{q}}
\hva \and
\inferrule[(ETick)]
{q-p\ge 0 }
{\ectx{\tick{p}{e}}{q}\hookrightarrow \ectx{e}{q-p}}
\hva \and
\inferrule[(EOp)]
{\mathbf{Eval}(\mathbf{op_i},\overrightarrow{v})= v'\\
}
{\ectx{\op{i}{v}}{p}\hookrightarrow
\ectx{v'}{p}
}
\hva \and
\inferrule[(EFix)]
{ }
{\ectx{\fix{x}{T}{e}}{p}\hookrightarrow \ectx{e[x\mapsto\fix{x}{T}{e}] }{p}}
\hva \and
\inferrule[(EPapp)]
{ }
{\ectx{\app{\Lam{x}{e}{T}}{v}}{p}\hookrightarrow \ectx{e[x\mapsto v]}{p}}
\hva \and
\inferrule[(EApp)]
{ }
{\ectx{\app{\llam{x}{e_1}{T_1}}{v_2}}{p}\hookrightarrow \ectx{e_1[x\mapsto v_2]}{p}}
\hva \and
\inferrule[(ECon0)]
{\ectx{e_0}{p}\hookrightarrow \ectx{e_0'}{q}}
{\ectx{\con{j}{e_0}{e_1}{e_{m_j}}}{p}\hookrightarrow \ectx{\con{j}{e_0'}{e_1}{e_{m_j}}}{q}}
\hva \and
\inferrule[(ELetl)]
{\ectx{e_1}{p}\hookrightarrow \ectx{e_1'}{q}}
{\ectx{\Let{x}{e_1}{e_2}}{p}\hookrightarrow \ectx{\Let{x}{e_1'}{e_2}}{q}}
\hva \and
\inferrule[(ELet)]
{ }
{\ectx{\Let{x}{v_1}{e_2}}{p}\hookrightarrow \ectx{{e_2[x\mapsto v_1]}}{q}}
\hva \and
\inferrule[(EConl)]
{\ectx{e_i}{p}\hookrightarrow \ectx{e_i'}{q}\\
i\ge 1
}
{\ectx{C_j({v_0},({v_1},...,v_{i-1},e_i,e_{i+1},...,e_{m_j}))}{p}\hookrightarrow
\ectx{C_j({v_0},({v_1},...,v_{i-1},e_i',e_{i+1},...,e_{m_j}))}{q}
}
\hva \and
\inferrule[(EOpi)]
{\ectx{e_i}{p}\hookrightarrow \ectx{e_i'}{q}\\
}
{\ectx{\mathbf{op_j}({v_0},({v_1},...,v_{i-1},e_i,e_{i+1},...,e_{m_j}))}{p}\hookrightarrow
\ectx{\mathbf{op_j}({v_0},({v_1},...,v_{i-1},e_i',e_{i+1},...,e_{m_j}))}{q}
}
\hva \and
\inferrule[(ECasel)]
{\ectx{e_0}{p}\hookrightarrow \ectx{e_0'}{q}}
{\ectx{\destructor{e_0}{C}{m}{e}}{p}\hookrightarrow \ectx{\destructor{e_0'}{C}{m}{e}}{q}}
\hva \and
\inferrule[(ECase)]
{ }
{\ectx{\destructor{\con{j}{v_0}{v_1}{v_{m_j}}}{C_j}{m_j}{e_j} }{p}\hookrightarrow \ectx{e_j[x_0,...,x_{m_j}\mapsto v_0,...,v_{m_j}]}{p} }

\end{mathpar}

\end{center}
\caption{Evalutaion Rules}
\label{fig:Eval}
\end{figure}

\subsection{Embedding}

We choose the AARA variant presented in \cite{JMSCS:HJ22}, which serves as a comprehensive summary of two decades of development in automatic amortized resource analysis.

The syntax is given as follows:
\[
\begin{array}{lll}
e ::= 
& x \\[0.2em]
& \mid \langle \rangle \\[0.2em]
& \mid \mathbf{let}\; x = e_1 \;\mathbf{in}\; e_2 \\[0.2em]
& \mid \langle e_1, e_2 \rangle \\[0.2em]
& \mid \mathbf{letp}\; \langle x_1, x_2 \rangle = e_1 \;\mathbf{in}\; e_2 \\[0.2em]
& \mid \mathbf{left}(e) \mid \mathbf{right}(e) \\[0.2em]
& \mid \mathbf{case}\; e \{\, \mathbf{left}(x_1) \mapsto e_1 \mid \mathbf{right}(x_2) \mapsto e_2 \,\} \\[0.2em]
& \mid \mathbf{nil} \mid \mathbf{cons}(x_1, x_2) \\[0.2em]
& \mid \mathbf{case}\; x \{\, \mathbf{nil} \mapsto e_0 \mid \mathbf{cons}(x_1, x_2) \mapsto e_1 \,\} \\[0.2em]
& \mid x_1(x_2) \\[0.2em]
& \mid \mathbf{fun}\; f\,x = e \\[0.2em]
& \mid \mathbf{tick}\; q \\[0.2em]
& \mid \mathbf{share}\; x \;\mathbf{as}\; x_1, x_2 \;\mathbf{in}\; e
\end{array}
\]

In AARA, both construction and deconstruction operations incur explicit resource costs, such as
$c_{\mathsf{Let}_1}$ or $c_{\mathsf{app}}$. All such costs can be uniformly encoded using the cost
construct $\tick{c}{e}$.

Notably, the report~\cite{JMSCS:HJ22} does not include typing rules for the pair constructor
$\langle e_1, e_2 \rangle$ or the destructor
$\mathbf{letp}\; \langle x_1, x_2 \rangle = e_1 \;\mathbf{in}\; e_2$.
However, pairs can be straightforwardly encoded in AARA using the $\mathbf{cons}$ constructor.
Concretely,
\[
\langle e_1, e_2 \rangle
\;\equiv\;
\Let{x_1}{e_1}{
\Let{x_2}{e_2}{
\Let{x_3}{\mathbf{cons}(x_2,\mathbf{nil})}{
\mathbf{cons}(x_1,x_3)}}}.
\]
Similarly,
\[
\mathbf{letp}\; \langle x_1, x_2 \rangle = e_1 \;\mathbf{in}\; e_2
\;\equiv\;
\Let{x}{e_1}{
\mathbf{case}\; x \{
\mathbf{nil} \mapsto \cdots \;\text{(unused branch)}
\mid \mathbf{cons}(x_1, x_2) \mapsto e_2
\}}.
\]
Therefore, we omit further discussion of pair translation in what follows.

We use the following encodings of standard types:
\[
\begin{aligned}
\mathsf{Unit} &= \mathsf{ind}(\{\}), \\
\mathsf{List}(T) &= \mathsf{ind}(\{\mathbf{nil}(\mathsf{Unit},0), \mathbf{cons}(T,1)\}), \\
T_1 + T_2 &= \mathsf{ind}(\{\mathbf{left}(T_1,0), \mathbf{right}(T_2,0)\}).
\end{aligned}
\]

Our translation function $h$ is defined as follows:
\[
\begin{array}{rcl}
h(x) &=& \tick{c_{\mathsf{var}}}{x} \\[0.2em]

h(\langle \rangle) &=& \tick{c_{\mathsf{Unit}}}{\mathsf{Unit}} \\[0.2em]

h(\Let{x}{e_1}{e_2})
&=&
\tick{c_{\mathsf{Let}_3}}{
\Let{x}{\tick{c_{\mathsf{Let}_1}}{h(e_1)}}{
\tick{c_{\mathsf{Let}_2}}{h(e_2)}}} \\[0.4em]

h(\mathbf{left}(e)) &=& \tick{c_{\mathsf{left}}}{\mathbf{left}(h(e))} \\[0.2em]

h(\mathbf{right}(e)) &=& \tick{c_{\mathsf{right}}}{\mathbf{right}(h(e))} \\[0.2em]

h(\mathbf{case}\; x \{\mathbf{left}(x_1)\mapsto e_1 \mid \mathbf{right}(x_2)\mapsto e_2\})
&=&
\mathsf{matd}\bigl(x,\{
\mathbf{left}(x_1).\tick{c_{\mathsf{CaseLeft}}}{h(e_1)},\\&&
\mathbf{right}(x_2).\tick{c_{\mathsf{CaseRight}}}{h(e_2)}
\}\bigr) \\[0.4em]

h(\mathbf{nil}) &=& \tick{c_{\mathsf{nil}}}{\mathsf{Nil}()} \\[0.2em]

h(\mathbf{cons}(x_1,x_2))
&=& \tick{c_{\mathsf{cons}}}{\mathsf{Cons}(x_1,x_2)} \\[0.2em]

h(\mathbf{case}\; x \{\mathbf{nil}\mapsto e_0 \mid \mathbf{cons}(x_1,x_2)\mapsto e_1\})
&=&
\mathsf{matd}\bigl(x,\{
\mathbf{nil}.\tick{c_{\mathsf{CaseNil}}}{h(e_0)},\\&&
\mathbf{cons}(x_1,x_2).\tick{c_{\mathsf{CaseCons}}}{h(e_1)}
\}\bigr) \\[0.4em]

h(x_1\,x_2) &=& \tick{c_{\mathsf{app}}}{\app{x_1}{x_2}} \\[0.2em]

h(\mathbf{fun}\; f\,x = e)
&=&
\tick{c_{\mathsf{fun}}}{(\mathbf{fix}\; f.\lambda x.\,h(e))} \\[0.2em]

h(\tick{q}{})
&=&
\tick{q}{\mathsf{Unit}} \\[0.2em]

h(\mathbf{share}\; x\; \mathbf{as}\; x_1,x_2\; \mathbf{in}\; e)
&=&
h(e[x_1\mapsto x,\; x_2\mapsto x])
\end{array}
\]

Most of the translation are straightforwards to show as the direction translation, except for the $\textbf{share } x \textbf{ as } x_1, x_2 \textbf{ in } e$. Since in our type system, the type no longer carries potentials, thus we don't need to split the type.

Now we do some definitions that will be used in our proof.

\begin{definition}[Potentials and Types]
We first define the potentials $\Phi_x(A)$ of a type $A$ so that it returns the potential function of a program variable $x$ of the type $A$.
\[
\begin{array}{llll}
\Phi_x(\mathsf{1})& =  Unit \\[0.2em]
\Phi_x(L^p(A))& =  fix~f~matd(x,\{nil(x).0,cons(x_1,x_2).(\Phi_{x_1}(x_1)+p+f~x_2)\} \\[0.2em]
\Phi_x(A^p+B^r)& =  fix~f~matd(x,\{left(x).(\Phi_{x}(A)+p),right(x).(\Phi_{x}(B)+r)\} \\[0.2em]
\Phi_x(A\times B)& =  \Phi_{x}(A)[x\mapsto \fst x]+ \Phi_x(B)[x\mapsto \snd x] \\[0.2em]
\Phi_x(A\xrightarrow{p/q}B)& =  0 \\[0.2em]
\end{array}
\]

We also define types $Type(\cdot)$ of a AARA types as follows:
\[
\begin{array}{llll}
Type(\mathsf{1})& =  Unit \\[0.2em]
Type(L^p(A))& =  List(Type(A)) \\[0.2em]
Type(A^p+B^r)& =  Type(A)+Type(B) \\[0.2em]
Type(A\times B)& =  Type(A)\times Type(B) \\[0.2em]
Type(A\xrightarrow{p/q}B)& =  \arrow{x}{p+\Phi_x(A)}{Type(A)}{q+\Phi_y(B)}{Type(B)}{y} \\[0.2em]
\end{array}
\]
Thus the potential function of a context $\Gamma$ in AARA system is defined as follows:
\[
\begin{array}{llll}
\Phi(\mctx{x}{A}{\Gamma})& =  \Phi_x(A)+\Phi(\Gamma) \\[0.2em]
\Phi(\cdot)&=0 \\[0.2em]
\end{array}
\]
We can also define the type context translation as follows:
\[
\begin{array}{llll}
Pure(\mctx{x}{A}{\Gamma})& =  \mctx{x}{Type(A)}{Pure(\Gamma)}\\[0.2em]
\Phi(\cdot)&=\cdot \\[0.2em]
\end{array}
\]
\end{definition}

Next, we prove the embedding. Before presenting the main proof, we first establish two useful lemmas.

\begin{definition}[Context Substitution]
For an AARA typing context $\Gamma$, we define context substitution $[x \mapsto y](\Gamma)$ as follows:
\begin{itemize}
  \item If $\Gamma = \Gamma_1,~x\!:\!A_1,~\Gamma_2,~y\!:\!A_2,~\Gamma_3$ and
  $A_3 \curlyveedownarrow (A_1,A_2)$, then
  \[
    [x \mapsto y](\Gamma) = \Gamma_1,~\Gamma_2,~y\!:\!A_3,~\Gamma_3.
  \]

  \item If $\Gamma = \Gamma_1,~y\!:\!A_1,~\Gamma_2,~x\!:\!A_2,~\Gamma_3$ and
  $A_3 \curlyveedownarrow (A_1,A_2)$, then
  \[
    [x \mapsto y](\Gamma) = \Gamma_1,~y\!:\!A_3,~\Gamma_2,~\Gamma_3.
  \]

  \item If $\Gamma = \Gamma_1,~x\!:\!A_1,~\Gamma_2$ and $y \notin Var(\Gamma)$, then
  \[
    [x \mapsto y](\Gamma) = \Gamma_1,~y\!:\!A_1,~\Gamma_2.
  \]

  \item If $x \notin Var(\Gamma)$, then
  \[
    [x \mapsto y](\Gamma) = \Gamma.
  \]

  \item Otherwise, $[x \mapsto y](\Gamma)$ is undefined.
\end{itemize}

Note that for any $A_1$ and $A_2$ such that $Type(A_1) = Type(A_2)$, a type
$A_3$ satisfying $A_3 \curlyveedownarrow (A_1,A_2)$ always exists.
\end{definition}

We now state a lemma that holds in the AARA system and a lemma that holds in $\lang{}$ that will be used in the embedding proof.

\begin{theorem}[AARA Substitution]\label{thm:aara-substitution}
If $\Gamma \vdashpq{p}{q} e : A$, then for all variables $x$ and $y$,
if either
(i) both $x$ and $y$ appear in $\Gamma$ with $x\!:\!A_1$ and $y\!:\!A_2$ and $Type(A_1) = Type(A_2)$,
or
(ii) at most one of $x$ and $y$ appears in $\Gamma$,
then
\[
([x \mapsto y]\Gamma) \vdashpq{p}{q} ([x \mapsto y]e) : A.
\]
\end{theorem}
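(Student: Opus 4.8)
The plan is to proceed by induction on the derivation of $\Gamma \vdashpq{p}{q} e : A$, treating the two hypotheses separately. It is convenient to first dispatch case~(ii), where at most one of $x,y$ occurs in $\Gamma$: here $[x\mapsto y]$ acts as a pure renaming of a single binding, and since renaming a variable to a fresh one commutes with every AARA rule (the constant potentials $p,q$ and the shape of $A$ are untouched, and the bound variables of binders can be kept distinct from $x,y$), a routine induction yields $([x\mapsto y]\Gamma) \vdashpq{p}{q} ([x\mapsto y]e) : A$. I would isolate this as an auxiliary renaming lemma and reuse it inside the harder case.

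The substance is case~(i), where both $x\!:\!A_1$ and $y\!:\!A_2$ occur and must be merged into a single $y\!:\!A_3$ with $A_3 \curlyveedownarrow (A_1,A_2)$. First I would note that $A_3$ exists precisely because $Type(A_1)=Type(A_2)$, which is the stated side condition together with the existence fact recalled just before the theorem. The base rule \textsc{(AARA:Var)} is vacuous for case~(i) since its context is a singleton. The interesting cases are the context-splitting rules \textsc{(AARA:Let)}, function application, \textbf{cons}, and the pair encoding, where a premise partitions the context as $\Gamma_1,\Gamma_2$. I would case on where $x$ and $y$ land: if both occur on the same side, I apply the induction hypothesis (case~(i)) to that premise and the renaming lemma (case~(ii)) to the other, then recombine. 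If $x$ lands in $\Gamma_1$ and $y$ in $\Gamma_2$, I first rename $x\mapsto y$ in the left premise (case~(ii)), so it now uses $y\!:\!A_1$, while the right premise uses $y\!:\!A_2$; I then reassemble the two derivations under a single $y\!:\!A_3$ by invoking the sharing relation $A_3 \curlyveedownarrow (A_1,A_2)$, whose defining property is exactly that the potential of $A_3$ splits additively as $\Phi_y(A_3)=\Phi_y(A_1)+\Phi_y(A_2)$. This is the very step that licenses AARA's \textbf{share} construct, so for the \textbf{share} rule itself the statement follows directly by merging the two shared copies of a variable.

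The main obstacle will be this opposite-sides sub-case of the splitting rules: I must show that AARA's context split $\Gamma_1,\Gamma_2$ can be refined into a \emph{sharing} split for the merged variable $y$ without disturbing $p,q$ or the result type $A$. Concretely, I would verify that replacing the occurrences $y\!:\!A_1$ and $y\!:\!A_2$ by one occurrence $y\!:\!A_3$ distributed via $\curlyveedownarrow$ preserves every side condition of the rule, using the additive splitting of potentials and the fact that each premise's pre/post potentials are unchanged; this is essentially a structural sharing (contraction-via-sharing) admissibility lemma for AARA, and establishing it uniformly across the splitting rules is the technical crux. The remaining care is bookkeeping---choosing each rule's bound variable (e.g.\ the let-bound $z$) fresh from $x,y$ so substitution commutes with the binder, and threading the sharing relation consistently across all arguments in the iterated split of \textbf{cons}. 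Once this reconstruction is in place, the structural potential-manipulation rules, \textbf{tick}, and the injections follow immediately from the induction hypothesis, completing the induction.
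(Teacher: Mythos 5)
Your proposal is correct and follows the same route as the paper, which simply states that the result ``proceeds by direct induction on the AARA typing derivation and is straightforward'' without giving any detail. Your elaboration of the genuinely nontrivial point---the opposite-sides sub-case of the context-splitting rules, resolved by refining the split into a sharing split for the merged variable $y$ via the additive potential property of $\curlyveedownarrow$---is exactly the content the paper's one-line proof leaves implicit.
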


\begin{proof}
The proof proceeds by direct induction on the AARA typing derivation and is straightforward.
\end{proof}

\begin{theorem}[Weakening]\label{thm:na-var-weakening}
If $\tjudge{\Gamma}{\cdot}{f_1}{e}{f_2}{A}{x}$ and $y$ is a fresh variable with respect to $\Gamma$,
then
\[
\tjudge{\mctx{\Gamma}{y}{B}}{\cdot}{f_1}{e}{f_2}{A}{x}.
\]
\end{theorem}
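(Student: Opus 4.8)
The plan is to prove the claim by structural induction on the derivation of $\tjudge{\Gamma}{\cdot}{f_1}{e}{f_2}{A}{x}$, where the statement weakens the context in the first slot of the judgement (the potential context $\Omega$) with a fresh binding $y:B$. Before the induction I would record an auxiliary \emph{weakening lemma for potential-function judgements}: whenever $\pjudge{\Omega}{\Gamma'}{f}{}{}$ or $\pjudge{\Omega}{\Gamma'}{f}{\le}{g}$ holds and $y$ is fresh, the same judgement holds under $\mctx{\Omega}{y}{B}$. This is immediate from the definitions in \Cref{fig:potential}: well-formedness only demands that every variable of $f$ appear in the context, a property preserved by adjoining a binding; and the inequality is defined to hold under \emph{all} instantiations of the free variables by values, so adding a fresh $y$ that does not occur in $f$ or $g$ contributes only a vacuous extra quantification.

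With this lemma fixed, the main induction runs case-by-case over the rules of \Cref{fig:typing1,fig:typing2} and their appendix companions. The leaf rules \textsc{(TInt)} and \textsc{(TVar)} are direct: \textsc{(TVar)} inspects only the type-context slot, which the weakening leaves untouched, while \textsc{(TInt)} only needs its local binder kept disjoint from $y$, which \textsc{(TRename)} can always re-establish. For the potential-shuffling rules \textsc{(TRelax)}, \textsc{(TDrop)}, \textsc{(TTickp)}, \textsc{(TTickn)} and \textsc{(TOp)} I would invoke the induction hypothesis on the premises and the auxiliary lemma on the attached side conditions, then reassemble the identical rule under $\mctx{\Gamma}{y}{B}$; the product and polymorphism rules \textsc{(TPair)}, \textsc{(TProj1)}, \textsc{(TProj2)} and \textsc{(TPapp)} follow the same pattern. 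The binder-introducing rules \textsc{(TAbs)}, \textsc{(TPabs)}, \textsc{(TFix)}, \textsc{(TCons)} and \textsc{(TDes)} are uniform: the induction hypothesis carries the weakening through the premise, and since $y$ is fresh it differs from every variable the rule binds, so the conditions $z \notin \ms{dom}(\Omega)\cup\ms{dom}(\Gamma)$ remain satisfiable (renaming $z$ off $y$ if needed) and the rule reapplies verbatim.

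The interesting cases are those that alter the very context being weakened, namely \textsc{(TApp)}, \textsc{(TLet)} and \textsc{(TErase)}; I expect \textsc{(TErase)} to be the principal obstacle. For \textsc{(TApp)} and \textsc{(TLet)} the conclusion already adjoins an existential binder $x$ to the first context, so after applying the induction hypothesis to both premises the auxiliary lemma discharges the sufficiency check $\pjudge{\mctx{\Omega}{x}{T_1}}{\Gamma}{f_3[w\mapsto x]}{\leq}{(f_2 + f_6)[w\mapsto x]}$ and the rule rebuilds the weakened conclusion. The subtlety with \textsc{(TErase)} is that weakening \emph{adds} a binding to the same context from which that rule \emph{deletes} one; here I would observe that, $y$ being fresh, it is never the erased variable and never among the free variables the rule's side condition forbids, so weakening commutes with erasure, i.e.\ the erasure of $x$ from $\mctx{\Omega}{y}{B}$ equals the weakening by $y$ of the erasure of $x$ from $\Omega$. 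This commutation closes the case, and with it the induction.

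Finally, I would note that the argument is insensitive to the position of the new binding inside the context, so no separate exchange lemma is required; the only real labour is the uniform freshness bookkeeping and threading the potential-function weakening lemma through every side condition.
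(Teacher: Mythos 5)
Your proposal is correct and follows essentially the same route as the paper, which proves this lemma by direct induction on the \lang{} typing derivation and simply declares it straightforward. Your additional detail --- the auxiliary weakening lemma for the potential-function judgements and the explicit treatment of \textsc{(TErase)}, \textsc{(TApp)}, and \textsc{(TLet)} --- fills in exactly the bookkeeping the paper leaves implicit.
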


\begin{proof}
The proof proceeds by direct induction on the typing derivation of $\lang{}$ and is straightforward.
\end{proof}

We now proceed to the main embedding proof.

\begin{theorem}[AARA embedding]
For $\Gamma \vdashpq{p}{q} e:A$, there exists a potential function
$\Phi_x'(A)$ such that
$\pjudge{Pure(\Gamma)}{\cdot}{q+\Phi_x(A)}{\le}{\Phi_x'(A)}$ and
\[
\tjudge{Pure(\Gamma)}{\cdot}{\Phi(\Gamma)+p}{h(e)}{\Phi_x'(A)}{Type(A)}{x}.
\]
\end{theorem}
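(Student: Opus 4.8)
The plan is to proceed by structural induction on the AARA derivation $\Gamma \vdashpq{p}{q} e : A$, producing in each case the witness $\Phi_x'(A)$ together with the \lang{} derivation, where the purified context $Pure(\Gamma)$ plays the role of the \lang{} type context and the global potential context is kept empty. The guiding invariant is that all potential coupled into the AARA types of the context is materialized into the input potential $\Phi(\Gamma)+p$, while all potential coupled into the result type, together with the residual constant $q$, is bounded by $\Phi_x'(A)$. In every leaf case one may simply take $\Phi_x'(A)=q+\Phi_x(A)$; the existential with inequality $\pjudge{Pure(\Gamma)}{\cdot}{q+\Phi_x(A)}{\le}{\Phi_x'(A)}$ is retained only to absorb the slack introduced by \textsc{(TRelax)} and \textsc{(TDrop)} in the composite cases. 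Throughout, each cost constant AARA charges (such as $c_{\mathsf{var}},c_{\mathsf{app}},c_{\mathsf{cons}},c_{\mathsf{Let}_i}$) is realized by the matching $\tick{c}{\cdot}$ that $h$ inserts and discharged by \textsc{(TTickp)}, which adds $c$ to the input exactly as the AARA rule does.

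First I would dispatch the leaf and storage cases. For \textsc{(AARA:Var)} I use \textsc{(TVar)}, raise both potentials to $\Phi_x(A)$ via \textsc{(TRelax)}, and prepend the $c_{\mathsf{var}}$ tick. For \textsc{(AARA:Nil)} and \textsc{(AARA:Cons)} I use \textsc{(TCons)}: the per-element potential $p$ is injected into the head component's output by a local \textsc{(TRelax)}, so that the side condition $f_5[y\mapsto \con{i}{x_0}{x_1}{x_{m_i}}]\le f_2+\sum_j f_{4_j}$ becomes an equality with $f_5=\Phi_y(\tlist{A^p})$; this is precisely where the potential annotation coupled to the list type in AARA is decoupled into the explicit function $\Phi_y(\tlist{A^p})$. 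The conceptual heart is the function case $\mathbf{fun}\;f\,x=e$: AARA's affine side condition forces $\Phi(\Gamma)=0$, so the induction hypothesis on the body yields a \lang{} derivation with input $\Phi_x(A)+p$, exactly the declared input annotation of $Type(A\xrightarrow{p/q}B)=\arrow{x}{p+\Phi_x(A)}{Type(A)}{q+\Phi_y(B)}{Type(B)}{y}$. I then lower the body's output from $\Phi_y'(B)$ to $q+\Phi_y(B)$ with \textsc{(TDrop)}, pack the potential with \textsc{(TAbs)}, wrap with \textsc{(TFix)}, and prepend the $c_{\mathsf{fun}}$ tick. Thus AARA's affine restriction corresponds exactly to \lang{}'s ability to capture the whole input potential in the arrow annotation, with no multiplicity needed.

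The application $x_1(x_2)$ is dual: I type both variables with \textsc{(TVar)}, inject the function's required pre-potential $p$ into the argument's output with \textsc{(TRelax)} so the \textsc{(TApp)} side condition balances, and then erase with \textsc{(TErase)} the existential binder that \textsc{(TApp)} deposits into $\Omega$. The \textsc{(AARA:Let)} and list-\textsc{case} cases follow the same recipe: for let I carry $\Phi(\Gamma_2)$ through $e_1$ with \textsc{(TRelax)}, weaken both subderivations to the shared context by \cref{thm:na-var-weakening}, apply \textsc{(TLet)}, and erase the introduced binder; for the destructor I use \textsc{(TDes)}, where the branch input $f_2[x\mapsto \con{i}{x_0}{x_1}{x_{m_i}}]$ releases exactly the head-element and per-element potential that AARA's case rule makes available, and I equalize the two branches' outputs with \textsc{(TDrop)}. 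The $\mathbf{share}$ case collapses to a variable-for-variable substitution, justified by a renaming analogue of \cref{lemma:substitution-1} together with the splitting identity $\Phi_x(A)=\Phi_x(A_1)+\Phi_x(A_2)$ whenever $A\curlyveedownarrow(A_1,A_2)$, which follows from the definitions of $\Phi$ and of annotation addition.

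The main obstacle I expect is keeping the global potential context $\Omega$ empty as the statement demands, since both \textsc{(TApp)} and \textsc{(TLet)} necessarily introduce an existential binder into $\Omega$. The delicate step is verifying, uniformly across these cases, that the freshly introduced binder occurs free neither in the output potential function nor in the result type, so that \textsc{(TErase)} applies; this hinges entirely on the non-dependence of AARA types, and it must be tracked carefully through the cost-annotated translation $h$. A secondary source of friction is the exact reconciliation of AARA's constant costs with the ticks inserted by $h$, especially for the compound let-translation $\tick{c_{\mathsf{Let}_3}}{\Let{x}{\tick{c_{\mathsf{Let}_1}}{h(e_1)}}{\tick{c_{\mathsf{Let}_2}}{h(e_2)}}}$, where most of the arithmetic bookkeeping accumulates even though each individual step is routine.
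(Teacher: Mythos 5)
Your proposal is correct and follows essentially the same route as the paper's proof: induction on the AARA derivation with $Pure(\Gamma)$ as the \lang{} type context, the same case-by-case use of \textsc{(TTickp)}/\textsc{(TRelax)}/\textsc{(TDrop)} to reconcile cost constants and potentials, the same key observation that the sharing side condition forces $\Phi(\Gamma)=0$ in the function case so the whole input potential can be packed into the arrow annotation, and the same appeals to weakening and substitution for \textsc{(L:Weak)} and \textsc{(L:share)}. Your explicit tracking of the \textsc{(TErase)} step needed to discharge the existential binders that \textsc{(TApp)} and \textsc{(TLet)} deposit in $\Omega$ is a point the paper's proof leaves implicit, but it does not change the argument.
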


\begin{proof}
We prove the theorem by induction on the AARA typing derivation
$\Gamma \vdashpq{p}{q} e:A$.

\begin{itemize}
  \item \textbf{Case L:Var.}
  The AARA typing rule is:
  \[
  \inferrule[(L:Var)]
  {q\ge q'+c_{Var}}{x:A \vdashpq{q}{q'} x:A}
  \]
  By translation, we must show that there exists $\Phi_x'(A)$ such that
  $\pjudge{x:Type(A)}{\cdot}{q+\Phi_x(A)}{\le}{\Phi_x'(A)}$ and
  \[
  \tjudge{x:Type(A)}{\cdot}{\Phi_x(A)+q}{\tick{c_{Var}}{x}}{\Phi_x'(A)}{Type(A)}{x}.
  \]
  Using rules \textsc{TTick}, \textsc{TVar}, and \textsc{TRelax}, we derive:
  \[
  \tjudge{x:Type(A)}{\cdot}{\Phi_x(A)+q}{\tick{c_{Var}}{x}}{\Phi_x(A)+q-c_{Var}}{Type(A)}{x}.
  \]
  Taking $\Phi_x'(A)=\Phi_x(A)+q-c_{Var}$ completes this case.

  \item \textbf{Case L:Unit, L:Let, L:Left, L:Right, L:MatchSum, L:Nil, L:Cons, L:MatchList.}
  These cases follow analogously to Case L:Var.

  \item \textbf{Case L:App.}
  \[
  \inferrule[(L:App)]
  {q\ge p+c_{App}\qquad q-q'\ge p-p'+c_{App}
  }{x_1:A\xrightarrow{p/p'} B, x_2:A \vdashpq{q}{q'} x_1~x_2:B}
  \]
  By translation, we must show
  \[
\begin{array}{ll}
{\mctx{x_1:\arrow{x}{p+\Phi_x(A)}{Type(A)}{q+\Phi_y(B)}{Type(B)}{y}}{x_2}{Type(A)}}~|~
  {\cdot}~|~{q+\Phi_{x_2}(A)}\vdash\\{\tick{c_{app}}{(x_1~x_2)}}:[{q'+\Phi_{y}(B)}]_{y}{Type(B)}.
\end{array}
  \]

  By rule \textsc{TApp}, we obtain
  \[
\begin{array}{ll}
  \tjudge{\mctx{x_1:\arrow{x}{p+\Phi_x(A)}{Type(A)}{q+\Phi_y(B)}{Type(B)}{y}}{x_2}{Type(A)}}
  {\cdot}{q+\Phi_{x_2}(A)}{\\\tick{c_{app}}{(x_1~x_2)}}{q-p-c_{App}+\Phi_y(B)}{Type(B)}{y}.
\end{array}
  \]
  Since $q\ge p+c_{App}$, this judgment is well-typed.
  Moreover, from $q-q'\ge p-p'+c_{App}$, taking
  $\Phi_y'(B)=q-p-c_{App}+\Phi_y(B)$ completes the proof of this case.

  \item \textbf{Case L:fun.}
  \[
  \inferrule[(L:fun)]
  {\Gamma \curlyveedownarrow(\Gamma, \Gamma)\qquad
   \Gamma,f:A\xrightarrow{p/p'} B,x:A \vdashpq{p}{p'}e:B\qquad
   q\ge q'+c_{fun}
  }{\Gamma \vdashpq{q}{q'}\textbf{fun } f\,x = e:B}
  \]
  Since $\Gamma \curlyveedownarrow(\Gamma,\Gamma)$, we have $\Phi(\Gamma)=0$.
  By translation, it suffices to show
  \[
\begin{array}{ll}
  {\mctx{Pure(\Gamma)}{f}{\arrow{x}{p+\Phi_x(A)}{Type(A)}{q+\Phi_y(B)}{Type(B)}{y}},x:Type(A)}~|~
  {\cdot}~|~{q+\Phi_{x_2}(A)}\vdash\\{\tick{c_{fun}}{(fix~f.\lambda x.h(e))}}:[{q'+\Phi_{y}(B)}]_{y}{Type(B)}.
\end{array}
  \]

  By the induction hypothesis, there exists $\Phi_y'(B)$ such that
  \[
  \pjudge{Pure(\Gamma,f:A\xrightarrow{p/p'} B,x:A)}{\cdot}{p'+\Phi_y(B)}{\le}{\Phi_y'(B)}
  \]
  and
  \[
  \tjudge{Pure(\Gamma,f:A\xrightarrow{p/p'} B,x:A)}{\cdot}{p+\Phi_x(A)}{e}{\Phi_y'(B)}{Type(B)}{y}.
  \]
  Consequently,
  \[
\begin{array}{ll}
  \tjudge{\mctx{Pure(\Gamma)}{f}{\arrow{x}{p+\Phi_x(A)}{Type(A)}{q+\Phi_y(B)}{Type(B)}{y}},x:Type(A)}
  {\cdot}{c_{fun}+\Phi_{x_2}(A)}{\\\tick{c_{fun}}{(fix~f.\lambda x.h(e))}}{\Phi_y(B)}{Type(B)}{y}.
\end{array}
  \]
  Applying rule \textsc{TRelax} yields the desired judgment.

  \item \textbf{Case L:Relax.}
  \[
  \inferrule[(L:Relax)]
  {\Gamma \vdashpq{p}{p'} e:A\qquad q\ge p\qquad q-q'\le p-p'}
  {\Gamma \vdashpq{q}{q'} e:A}
  \]
  By induction, there exists $\Phi_x'(A)$ such that
  \[
  \pjudge{Pure(\Gamma)}{\cdot}{p'+\Phi_x(A)}{\le}{\Phi_x'(A)}
  \]
  and
  \[
  \tjudge{Pure(\Gamma)}{\cdot}{\Phi(\Gamma)+p}{h(e)}{\Phi_x'(A)}{Type(A)}{x}.
  \]
  Let $\Phi_x''(A)=\Phi_x'(A)+q-p$. Applying rule \textsc{TRelax} yields
  \[
  \pjudge{Pure(\Gamma)}{\cdot}{q-p+p'+\Phi_x(A)}{\le}{\Phi_x''(A)}
  \]
  and
  \[
  \tjudge{Pure(\Gamma)}{\cdot}{\Phi(\Gamma)+q}{h(e)}{\Phi_x''(A)}{Type(A)}{x}.
  \]
  This completes the case.

  \item \textbf{Case L:Weak.}
  This case follows directly from Theorem~\ref{thm:na-var-weakening}, and then reduces to Case L:Relax.

  \item \textbf{Case L:share.}
  \[
  \inferrule[(L:share)]
  {A \curlyveedownarrow(A_1, A_2)\qquad \Gamma,~x_1:A_1,~x_2:A_2\vdashpq{q}{q'} e:B }
  {\Gamma,x:A \vdashpq{q}{q'} e:B}
  \]
  By Theorem~\ref{thm:aara-substitution}, we have
  \[
  \Gamma,x:A \vdashpq{q}{q'} [x_1\mapsto x,x_2\mapsto x]e:B.
  \]
  By induction (note that $\Gamma,x:A$ contains strictly fewer variables than
  $\Gamma,x_1:A_1,x_2:A_2$), there exists $\Phi_x'(A)$ such that
  \[
  \pjudge{Pure(\Gamma)}{\cdot}{q'+\Phi_x(A)}{\le}{\Phi_x'(A)}
  \]
  and
  \[
  \tjudge{Pure(\Gamma)}{\cdot}{\Phi(\Gamma)+q}{h(e)}{\Phi_x'(A)}{Type(A)}{x}.
  \]
  This corresponds exactly to the translation
  \[
  h(\textbf{share } x \textbf{ as } x_1, x_2 \textbf{ in } e)
  =
  h(e[x_1\mapsto x,x_2\mapsto x]),
  \]
  and therefore the case is proved.
\end{itemize}
\end{proof}

\subsection{Untypability of AARA}

In this section, we formally show that AARA is unable to type the \emph{map-append} example as well as the \emph{append} example introduced earlier.

To this end, it suffices to show that the term $append~e_1~e_2$ is not typable in AARA, and that the intermediate closure $append~e_1$ already leads to a problematic typing. In AARA, the closure $append~e_1$ must have a type of the form
\[
List^{p_2}(\mathsf{int}) \xrightarrow{p_3/q_3} List^{q_2}(\mathsf{int}),
\]
and therefore the only admissible typing judgment for $append~e_1$ is
\[
\cdot \vdashpq{p_1}{q_1} append~e_1 :
List^{p_2}(\mathsf{int}) \xrightarrow{p_3/q_3} List^{q_2}(\mathsf{int}).
\]

Consequently, for the application $append~e_1~e_2$, the function consumes
$p_2 \cdot \mathsf{length}(e_2)$ units of resource and produces
$q_2 \cdot \mathsf{length}(e_1 + e_2)$ units of resource.
Since $p_1$, $q_1$, $p_2$, $q_2$, $p_3$, and $q_3$ are all constants that do not depend on $e_1$,
we may choose $e_1$ such that
\[
\mathsf{length}(e_1) > \frac{p_3 + p_1}{q_2}.
\]
Next, let $e_2$ be the empty list $\mathsf{nil}$. In this case, the output potential
$q_2 \cdot \mathsf{length}(e_1 + \mathsf{nil}) = q_2 \cdot \mathsf{length}(e_1)$
is strictly greater than the total input potential available.

This contradicts the soundness of AARA, which guarantees that all typable terms can be evaluated safely without exceeding available resources. Hence, the term $append~e_1~e_2$ cannot be typable in AARA.

Therefore, $append~e_1~e_2$—and consequently higher-order examples such as \emph{map-append}—are not typable in AARA.

\end{document}